\tikzset{
    every node/.style={
        circle,
        draw,
        %solid, no necessary now because the nodes are drawn first
        fill          = black!50,
        inner sep     = 0pt,
        minimum width =4 pt
    }   
}
\newtheorem{thm}{Theorem}
\newtheorem{lem}{Lemma}
\newtheorem{cor}{Corollary}
\newtheorem{prob}{Problem}
\newtheorem{prop}{Proposition}
\newtheorem{rem}{Remark}
\newcommand{\Sym}{\text{Sym}_n(\mathbb{R}_{\ge 0})}
\DeclareMathOperator*{\argmin}{arg\,min}
\title{Generalized Metric Repair on Graphs}
\date{}
\author{
  Anna C.~Gilbert \footnote{
  Department of Mathematics,
  University of Michigan - Ann Arbor,
  \texttt{annacg@umich.edu}}
  \and
  Rishi Sonthalia \footnote{
  Department of Mathematics,
  University of Michigan - Ann Arbor,
  \texttt{rsonthal@umich.edu}}
}
\begin{document}

\maketitle
\vspace{-2em}
\begin{abstract}
Many modern data analysis algorithms either assume that or are considerably more efficient if the distances between the data points satisfy a metric. These algorithms include metric learning, clustering, and dimensionality reduction. Because real data sets are noisy, the similarity measures often fail to satisfy a metric. For this reason, Gilbert and Jain~\cite{Gilbert2017} and Fan, et al.~\cite{Raichel2018} introduce the closely related problems of \emph{sparse metric repair} and \emph{metric violation distance}. The goal of each problem is to repair as few distances as possible to ensure that the distances between the data points satisfy a metric. 

We generalize these problems so as to no longer require all the distances between the data points. That is, we consider a weighted graph $G$ with corrupted weights $w$ and our goal is to find the smallest number of modifications to the weights so that the resulting weighted graph distances satisfy a metric. This problem is a natural generalization of the sparse metric repair problem and is more flexible as it takes into account different relationships amongst the input data points. As in previous work, we distinguish amongst the types of repairs permitted (decrease, increase, and general repairs). We focus on the increase and general versions and establish hardness results and show the inherent combinatorial structure of the problem. We then show that if we restrict to the case when $G$ is a chordal graph, then the problem is fixed parameter tractable. We also present several classes of approximation algorithms. These include and improve upon previous metric repair algorithms for the special case when $G=K_n$
\end{abstract}

%!TEX root = ./main.tex
\section{Introduction}

Similarity distances that satisfy a metric are fundamental to large number of machine learning tasks such as dimensionality reduction and clustering (see~\cite{Wang2015, Baraty2011} for two examples). However, due to noise, missing data, and other corruptions, in practice, these distances do not often adhere to a metric. Motivated by these observations and early work by \citet{Brickell}, \citet{Raichel2018} and, independently, \citet{Gilbert2017} respectively formulated the \emph{Metric Violation Distance (MVD)} and the \emph{sparse metric repair (SMR)} problems. Formally, the problem both authors studied was given a distance matrix, modify as few entries as possible so that the repaired distances satisfy a metric.

These algorithms were successfully used by Gilbert and Sonthalia~\cite{GilbertSonthalia:SwissRoll2018} to modify a large class of dimensionality reduction algorithms to can handle missing data. For many applications, (e.g. metric learning, metric embedding for graph metrics), however, we want to enforce restrictions on some of the distances and stipulate that the rest of the distances follow from these restrictions. For example, \citet{Tenenbaum2005} showed that {\sc Isomap} can be approximated by looking at only a few key points. In many of these cases, using the approximation algorithms in \cite{Raichel2018, Gilbert2017} may not produce meaningful results.

\begin{figure}[h]
\centering
\includegraphics[scale=0.35]{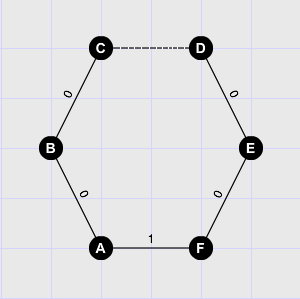} \hspace{3cm}
\includegraphics[scale=0.35]{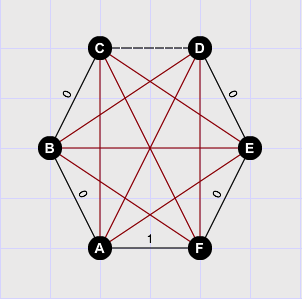} 

\caption{Example showing the blow up of optimal solution size, if we add in missing edges. Left: A $C_n$ with 1 edge of weight 1 and rest weight 0. The optimal solution has size 1. Right: The red edges have been added in and now the optimal solution has size $\Theta(n)$}
\label{fig:eg}

\end{figure}

As a more algorithmic example, let us consider traveling salesman problem (TSP). As \cite{Raichel2018} presented, this problem, in general, admits no polynomial time constant factor approximation \cite{TSP} (unless $P=NP$), but if we know the distances satisfy a metric, there is 1.5-approximation algorithm \cite{mTSP}. Thus, one approach to get an approximate solution to the general TSP $G$ could be to ``repair" the distances to satisfy a metric and then use the 1.5-approximation on the new weighted graph. In this case, if we naively complete the graph by setting the weight of the missing edges to be the shortest distance between its end points in $G$, then the size of the optimal solution may drastically increase, as shown in Figure \ref{fig:eg}. Thus, an approximation algorithm for repairing the distances may repair \emph{a large fraction of the original distances}. Hence, we would like an algorithm that repaired the metric by only changing the original given distances in $G$, and hence, giving us a tighter approximation ratio. 

To capture this more general nature of the problem we define the \emph{graph metric repair} problem as the natural graph theoretic generalization of the MVD and SMR problems: \begin{quote} Given a weighted graph $G$ and a set $\Omega$, find the smallest set of edges $S$, such that we can modify the weights of the edges in $S$ by values in $\Omega$ so that the new distances adhere to a metric.\end{quote}

This additional graph structure introduced in the generalized problem lets us incorporates different types of relationships amongst data points gives us more flexibility in its structure and hence, avails itself to be applicable to a richer class of problems. Furthermore, while \citet{Gilbert2017} showed that sparse metric repair can be approximated empirically via convex optimization, both \cite{Gilbert2017} and \cite{Raichel2018} developed combinatorial, as opposed to convex optimization, algorithms, based upon All Pairs Shortest Path (APSP) computations. Thus, metric repair is inherently a combinatorial problem and the generalized graph problem helps elucidate this structure. This insight and generalization helps us understand the combinatorial structure of the problem, carry out fine-grained complexity analysis that incorporates different types of relationships amongst data points, and analyze approximation algorithms.  

\noindent\textbf{Contributions and Results:} The main contributions of this paper are as follows: \begin{itemize}
\item We show that, the decrease only version of the problem ($\Omega = \mathbb{R}_{\le 0}$) can be solved in cubic time and that if we are allowed to increase distances at all, even by a single number, then the problem becomes NP-Complete. (Section 2.3.1)
\item We provide a \emph{characterization} for the the support of solutions to the increase ($\Omega = \mathbb{R}_{\ge 0}$) and general ($\Omega = \mathbb{R}$) versions of the problem. Furthermore, we provide a cubic time algorithm that determines for any subset of the edges whether there exists \emph{any valid solution} with that support and finds one (if it exits). (Section 2.3.2)
\item We present new insights that allow us to show that if we restrict to chordal graphs $G$ then problem is \emph{fixed parameter tractable}, thus answering an open question posed by~\cite{Raichel2018}. (Section 3)
\item We give a \emph{new approximation algorithm} for the general problem that has an approximation ratio of $L$ where $L+1$ is the length of the largest broken cycle and provide a comparative analysis of all three algorithms for the MVD or SMR problem. In particular, we improve the running time of the \citet{Raichel2018} algorithm from $\Theta(n^6)$ to $\Theta(n^5)$, and provide a lower bound for the approximation ratio for the algorithm from \cite{Gilbert2017}. (Section 4)
\end{itemize}

%!TEX root = ./main.tex
\section{Preliminaries}
\label{sec:prelim}

\subsection{Problem set up}

Let us start by defining some terminology. A cycle $C = v_1, \hdots, v_n$ in a weighted graph $G = (V,w)$ is \textbf{broken} if there exists an edge $v_iv_{i+1}$ such that \[ w(v_{i},v_{i+1}) > \sum_{e \in C \backslash \{v_i,v_{i+1}\}} w(e) \] In this case we shall call the edge on the left with the (too) large edge weight the \textbf{top} edge and the rest of the edges as \textbf{bottom} edges. Hence, given a weighted graph $G = (V,E,w)$ we say that $G$ satisfies a \textbf{metric} if there are no broken cycles. Finally, let {\rm Sym}$_n(\Omega)$ to be the set of $n \times n$ symmetric matrices with entries drawn from $\Omega$. Now we can define the generalized graph metric repair problem as follows: 

\begin{prob} \label{prob:metric} 
Given a set $\Omega$ and a weighted graph $G = (V,E,w)$ we want to find
\[ 
	\underset{W \in {\rm Sym}(\Omega)}{\argmin}\|W\|_0 \text{ such that } G = (V,E,w+W) \text{ satisfies a metric,} 
\] 
or return NONE, if no such $W$ exists. Here $\|W\|_0$ is number of non zero entries in $W$ or the $\ell_0$ pseudonorm entry-wise in the matrix $W$. Let us denote this problem as graph metric repair or MR($G, \Omega$). 
\end{prob}

We will also need the following basic graph theory definitions: $K_n$ is the complete graph on $n$ vertices. $C_n$ is the cycle $n$ vertices. A \textbf{chord} of a cycle is an edge connecting two non-adjacent vertices. A \textbf{chordal graph} is a graph in which all cycles have at least one chord. The \textbf{suspension} $\nabla G$ of a graph $G$ is obtained by adding a new vertex and connecting it to all other vertices. \\

Finally, for notational simplicity, $G = (V, E, w)$ will refer to the original input to the metric repair problem throughout this entire paper. Additionally, throughout the paper $k$ and {\rm OPT} will be the size of the optimal solution. We shall use them interchangeably depending on the context. Finally, $L+1$ will be the length of the longest broken cycle.

\subsection{Previous Results}

Fan, et al.~\cite{Raichel2018} and Gilbert and Jain~\cite{Gilbert2017} studied the special case of MR$(G, \Omega)$ where $G = K_n$ and $\Omega = \mathbb{R}_{\le 0}$ (decrease case), $\mathbb{R}_{\ge 0}$ (increase case), and $\mathbb{R}$ (general case). They present two different kinds of results; hardness results for the different versions of the problem and structural results about solutions to these problems. In particular, the major results from these previous works are:

\begin{thm} \cite{Raichel2018, Gilbert2017} \label{thm:specialDecrease} The problem MR$(K_n, \mathbb{R}_{\le 0})$ can be solved in cubic time. 
\end{thm}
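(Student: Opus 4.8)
The plan is to show that the shortest path metric of $G$ is simultaneously a feasible decrease-only repair and an optimal one, and that it can be computed in cubic time by All Pairs Shortest Paths. Let $d(i,j)$ denote the shortest path distance between $i$ and $j$ in $G=(V,E,w)$ (well defined for every pair since $G=K_n$ is connected), and call an edge $ij$ \emph{bad} if $d(i,j)<w(i,j)$. Since the single edge $ij$ is itself an $i$--$j$ path we always have $d(i,j)\le w(i,j)$, so the bad edges are exactly those that admit a strictly shorter detour. First I would verify the upper bound: the assignment $w'(i,j)=d(i,j)$ is a genuine metric (shortest path distances satisfy the triangle inequality, hence no cycle is broken), it is obtained from $w$ only by decreasing weights so that $W=w'-w\in{\rm Sym}_n(\mathbb{R}_{\le 0})$, and it differs from $w$ precisely on the bad edges. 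Thus there is a feasible solution whose support is exactly the set of bad edges.

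The heart of the argument is the matching lower bound: every feasible decrease-only repair must modify every bad edge, so no solution can have smaller support. For this I would fix an arbitrary feasible $W$ with new weights $\widehat w = w+W \le w$ entry-wise (since $\Omega=\mathbb{R}_{\le 0}$), and suppose for contradiction that some bad edge $ij$ is left untouched, i.e. $\widehat w(i,j)=w(i,j)$. Let $P$ be a shortest $i$--$j$ path in the \emph{original} graph, so $\sum_{e\in P} w(e)=d(i,j)<w(i,j)$. Because the repair can only decrease weights, $\sum_{e\in P}\widehat w(e)\le \sum_{e\in P} w(e)=d(i,j)<w(i,j)=\widehat w(i,j)$, so the cycle $P\cup\{ij\}$ is still broken in $\widehat w$ with $ij$ as its top edge, contradicting that $\widehat w$ satisfies a metric. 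Hence every bad edge must lie in the support of $W$, giving $\|W\|_0 \ge |\{\text{bad edges}\}|$; combined with the upper bound, the shortest path metric is optimal.

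Finally I would bound the running time: computing all pairwise distances $d(i,j)$ (e.g.\ by Floyd--Warshall) takes $\Theta(n^3)$ time, after which reading off the bad edges and outputting $W=w'-w$ costs an additional $O(n^2)$, yielding an overall cubic-time algorithm.

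The step I expect to be the main obstacle is the lower bound, specifically making the ``top edge stays broken'' argument airtight: one has to exploit the decrease-only constraint in two places at once, namely that the untouched edge $ij$ retains its too-large weight while every edge along its shortest detour can only shrink, so that the witnessing broken cycle survives in \emph{any} feasible solution rather than merely in the specific shortest path metric. Some care is also needed because modifying edges changes shortest paths globally, but the argument sidesteps this by testing each putative solution against the fixed original path $P$ instead of recomputing distances after modification.
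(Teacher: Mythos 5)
Your proposal is correct and follows essentially the same route as the paper's own argument (the {\sc Dmr} algorithm in the appendix): replace each edge weight by the shortest-path distance between its endpoints, observe that every ``bad'' edge is the top edge of a broken cycle that survives any decrease-only modification and hence must be repaired, and compute everything via APSP in $\Theta(n^3)$ time. Your write-up of the lower bound is in fact more explicit than the paper's one-line version, but the underlying idea is identical.
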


\begin{thm}\cite{Raichel2018} \label{thm:specialNP} 
The problems MR$(K_n, \mathbb{R}_{\ge 0})$ and MR$(K_n, \mathbb{R})$ are NP-Complete (In fact APX-Hard) and permit $O(OPT^{1/3})$ approximation algorithms. 
\end{thm}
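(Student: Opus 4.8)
The plan is to establish three things in turn: membership in NP, NP-hardness strengthened to APX-hardness, and the $O(\mathrm{OPT}^{1/3})$ approximation. I would treat $\Omega = \mathbb{R}_{\ge 0}$ and $\Omega = \mathbb{R}$ together, exploiting the fact that on $K_n$ a weighting satisfies a metric if and only if \emph{every triangle} satisfies the triangle inequality; so I never need to reason about broken cycles longer than $3$, and fixing the broken triangle $\{a,b,c\}$ can be done only by editing one of the three edges $ab,bc,ca$, since no other edge appears in its inequality. For membership in NP, I would take the certificate for the decision version ``is $\mathrm{OPT}\le k$?'' to be the support $S$ with $|S|\le k$. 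Given $S$, deciding whether the off-support weights (held fixed at $w$) can be completed to a metric by modifying only the edges of $S$ is a linear feasibility problem in the final weights $w'$: the constraints are the triangle inequalities $w'(uv)\le w'(uz)+w'(zv)$, the off-support edges are pinned to $w(e)$, and the on-support edges carry the sign constraint coming from $\Omega$. This LP has polynomially many constraints and vertices of polynomial bit-complexity, so feasibility and an optimal completion can be found in polynomial time, making $S$ a valid polynomial certificate.

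For hardness I would give a gap-preserving (L-)reduction from a canonical APX-hard problem such as Minimum Vertex Cover on bounded-degree graphs. The intended correspondence uses a global apex vertex $r$ and one vertex $a_i$ per vertex $v_i$ of the cover instance; the editable edge for $v_i$ is $a_ir$, and each cover-edge $\{v_i,v_j\}$ is encoded by the triangle $\{a_i,a_j,r\}$ with a too-long fixed \emph{top} edge $a_ia_j$ and the two \emph{bottom} edges $a_ir, a_jr$. In the increase case such a triangle can be repaired only by increasing a bottom edge, so editing $a_ir$ (shared by all gadget triangles incident to $v_i$) corresponds exactly to placing $v_i$ in the cover, and a feasible repair restricted to the editable edges is precisely a vertex cover. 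Choosing the edit values uniformly makes the two optima agree up to the constants an L-reduction permits, transferring APX-hardness. The main obstacle here is \textbf{non-interaction of the gadgets}: since increasing $a_ir$ changes every triangle containing it, I must pick the background weights on one large and one well-separated small scale so that no edit ever creates a new broken triangle, no edit cheaply fixes a gadget it was not meant to fix, and no repair supported \emph{outside} the editable edges beats the cover solution. For the general case ($\Omega=\mathbb{R}$) this last point is sharper, because one could instead \emph{decrease} a too-long top edge; the gadgets must therefore be replicated so that directly editing the top edges is never cheaper than the vertex-cover solution. Verifying this gadget isolation is the most delicate part of the hardness direction.

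For the approximation, the starting point is the structural observation that an optimal support $S^*$ with $|S^*|=k$ is incident to at most $2k$ vertices $T$; hence every originally-broken triangle shares its edited edge, and so two of its three vertices, with $T$, while the weighting induced on $V\setminus T$ already satisfies the metric. I would first reduce to knowing $k$ up to a constant factor via the standard doubling trick, and then construct a feasible repair localized near $T$. Since touching all $\Theta(|T|\cdot n)$ edges incident to $T$ is far too expensive, the crux is a charging/packing argument balancing two quantities: a lower bound on $\mathrm{OPT}$ from a maximal edge-disjoint packing of broken triangles (each triangle forces a distinct edit and contributes three edges), against the number of edits the construction actually makes among and around the $O(k)$ vertices of $T$. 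Optimizing this trade-off is what produces the cube-root exponent, yielding a solution of size $O(k^{4/3})$, i.e. an approximation ratio of $O(\mathrm{OPT}^{1/3})$. I expect this balancing step to be the main technical obstacle, precisely because hitting every broken triangle is \emph{necessary but not sufficient}: the chosen edits must simultaneously admit globally consistent metric weights, so the argument has to turn a hitting-set-style lower bound into an actual consistent completion of the right size rather than merely a set of edges meeting all broken triangles.
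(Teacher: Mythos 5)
Your NP-membership argument (support set as certificate, then a polynomial feasibility check over the triangle inequalities with off-support weights pinned) and your hardness reduction are essentially the ones this paper relies on: your apex vertex $r$ with editable edges $a_ir$ and fixed long edges $a_ia_j$ is exactly the suspension $\nabla G$ construction with weights $3\alpha$ and $\alpha$ used in the proof of Theorem~\ref{thm:NP} and attributed to \cite{Raichel2018}, and the paper's \textsc{Verifier} (Proposition~\ref{prop:verifier}) plays the role of your LP. Those two parts are sound, modulo the background-weight bookkeeping you already flag.

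The genuine gap is in the approximation. Your opening premise --- that since metricity on $K_n$ is equivalent to every triangle satisfying the triangle inequality, you ``never need to reason about broken cycles longer than $3$'' --- is exactly what fails. By Theorem~\ref{thm:structure} (the two-sided version of Theorem~\ref{thm:specialStrctr}), $S$ is the support of a valid repair if and only if it contains an edge (a bottom edge, in the increase case) of \emph{every broken cycle of the original weighting}, and a set that hits every broken triangle need not hit every broken cycle: the broken triangle guaranteed inside a long broken cycle is witnessed by a \emph{chord}, so $S$ can cover that triangle using chord edges that do not lie on the long cycle at all, leaving the long cycle uncovered and the support invalid. This is precisely why the \cite{Raichel2018} algorithm (sketched in this paper's appendix as \textsc{5 Cycle Cover}) is organized around explicitly covering all broken cycles of length at most $6$ (improved here to $5$) and then arguing, via unit cycles and the $chord4$ step, that every longer broken cycle already contains a short cycle of previously chosen edges and is therefore covered for free. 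Your proposed replacement --- an edge-disjoint packing of broken triangles balanced against edits near the $O(k)$ vertices touched by $S^*$ --- never supplies that mechanism: you correctly note that hitting broken triangles is ``necessary but not sufficient,'' but the missing step is not the charging computation, it is the structural argument converting a cover of short broken cycles into a cover of \emph{all} broken cycles, and that is where the $O(\mathrm{OPT}^{1/3})$ bound actually comes from. As written, the approximation part of the proposal does not go through.
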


\begin{thm} \citep{Gilbert2017} \label{thm:oracle} 
Given a complete weighted graph $G$ and a support $S$ such that there exists an increase only solution on $S$ for $G$ and  $G - S$ is a connected graph, then for any edge $uv \in S$, setting the weight of $uv$ to be the shortest distance between $u$ and $v$ is a valid increase only solution. 
\end{thm}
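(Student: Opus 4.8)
The plan is to exhibit an explicit increase-only solution with support contained in $S$ and verify it directly. Let $W$ be the hypothesized increase-only solution on $S$ and write $w^\ast = w + W$, so that $(V,E,w^\ast)$ is a metric, $w^\ast = w$ on $E \setminus S$, and $w^\ast \ge w$ on $S$. Let $d_{G-S}(\cdot,\cdot)$ denote shortest-path distance in the graph obtained from $G$ by deleting the edges of $S$; this is finite for every pair because $G - S$ is connected. Define the candidate $w'$ by $w'(uv) = d_{G-S}(u,v)$ for $uv \in S$ and $w'(e) = w(e)$ for $e \in E \setminus S$. I would then prove (a) that $w'$ is increase-only with support in $S$, and (b) that $(V,E,w')$ is a metric.

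The key structural observation I would establish first is that shortest paths under $w'$ never profit from using an edge of $S$, so that
\[ d_{w',G}(a,b) = d_{G-S}(a,b) \quad \text{for all } a,b \in V. \]
The inequality $d_{w',G}(a,b) \le d_{G-S}(a,b)$ is immediate since $w' = w$ on $G - S$. For the reverse inequality I would take any $w'$-path in $G$ and replace each edge $uv \in S$ that it uses by a shortest $G-S$ path between $u$ and $v$; by definition of $w'$ this substitution does not change the length of that segment, and the resulting walk lies entirely in $G - S$, so its length is at least $d_{G-S}(a,b)$.

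Next I would use the existence of the metric $w^\ast$ to control every edge. Because $(V,E,w^\ast)$ has no broken cycles and $w^\ast$ agrees with $w$ on $G - S$, the cycle formed by an edge $ab$ together with a shortest $G-S$ path between $a$ and $b$ cannot have $ab$ as an oversized top edge; this gives $w^\ast(ab) \le d_{G-S}(a,b)$ for every edge $ab$. For $ab \in S$ this yields $w'(ab) = d_{G-S}(a,b) \ge w^\ast(ab) \ge w(ab)$, which is exactly the increase-only property on the support. For $ab \notin S$ we have $w(ab) = w^\ast(ab) \le d_{G-S}(a,b)$, while trivially $d_{G-S}(a,b) \le w(ab)$ since the edge $ab$ itself lies in $G - S$; hence $w'(ab) = w(ab) = d_{G-S}(a,b)$.

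Combining these, every edge satisfies $w'(ab) = d_{G-S}(a,b) = d_{w',G}(a,b)$, so no edge weight exceeds the shortest-path distance between its endpoints; equivalently, no cycle is broken and $(V,E,w')$ is a metric. I expect the main obstacle to be the off-support edges: raising the weights of the edges of $S$ could in principle create a broken cycle whose top is an unmodified edge $ab$, and ruling this out is precisely where the hypothesis that some increase-only solution exists is used, through the bound $w(ab) \le d_{G-S}(a,b)$ that pins $w(ab)$ to its $G-S$ shortest distance. The connectivity of $G - S$ is likewise essential, as it guarantees that $d_{G-S}(a,b)$ is finite and hence that the construction is well defined.
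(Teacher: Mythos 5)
Your construction is the same one the paper uses (the theorem itself is cited from Gilbert and Jain, but the in-paper proofs of Theorem \ref{thm:structure} and Proposition \ref{prop:verifier} rely on exactly this device): reset each edge of $S$ to the shortest-path distance between its endpoints in $G-S$, observe that then every edge weight equals the shortest-path distance between its endpoints in the repaired graph, and conclude there are no broken cycles. Your argument is correct, and connectivity of $G-S$ enters exactly where you say it does, to make $d_{G-S}$ finite (the paper's more general version has to patch this with $\|w\|_\infty$ when $G-S$ is disconnected).

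The one place you genuinely diverge is in verifying the increase-only property. The paper works from the combinatorial characterization ($S$ contains a bottom edge of every broken cycle) and argues by contradiction that an edge of $S$ whose weight would be decreased would be the top edge of a broken cycle none of whose bottom edges lie in $S$. You instead invoke the hypothesized increase-only solution $w^{\ast}$ directly and extract the chain $w(ab) \le w^{\ast}(ab) \le d_{G-S}(a,b)$ from the fact that $w^{\ast}$ is a metric agreeing with $w$ off $S$. This is cleaner for the theorem as stated (whose hypothesis is precisely the existence of such a $w^{\ast}$), and it has the nice side effect of simultaneously handling the off-support edges, giving $w(ab) = d_{G-S}(a,b)$ for $ab \notin S$ without a separate cycle argument; the paper's route is the one that generalizes to the if-and-only-if characterization of supports, where no solution is assumed in advance. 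Both are valid; yours is the more economical proof of this particular statement.
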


\begin{thm} \cite{Raichel2018} \label{thm:specialStrctr} For a complete weighted graph $G = (V,E,w)$ with non negative weights and $S \subset E$ we have the following: 
\begin{enumerate}[nosep]
\item If $S$ contains an edge from each broken cycle, then $S$ is the support to solution to MR$(G, \mathbb{R})$. 
\item If $S$ contains a bottom edge from each broken cycle. Then $S$ is the support to solution to MR$(G, \mathbb{R}_{\ge 0})$.
\end{enumerate} 
\end{thm}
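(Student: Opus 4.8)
The plan is to reduce both parts to a single structural fact about the subgraph $G' = (V, E\setminus S, w)$ obtained by deleting the candidate edges $S$ while keeping the original weights on the remaining edges, and then to repair $G$ by redefining the weight of every pair to be its shortest-path distance $d_{G'}$ in $G'$ (well defined since the weights are non-negative). The first step is the key lemma: \emph{$G'$ has no broken cycle}. Indeed, any broken cycle of $G'$ uses only edges of $E\setminus S$, so it is a broken cycle of $G$ containing no edge of $S$; in part 1 this contradicts the assumption that $S$ meets every broken cycle, and in part 2 it contradicts the stronger assumption that $S$ contains a \emph{bottom} edge of every broken cycle. A useful consequence is that for each edge $uv \in E\setminus S$ we have $w(uv) = d_{G'}(u,v)$: the bound $w(uv)\ge d_{G'}(u,v)$ is trivial because $uv$ itself is a $u$--$v$ path in $G'$, and the reverse inequality is exactly the statement that no cycle of $G'$ is broken with top edge $uv$.

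For part 1, define $w'(uv) = d_{G'}(u,v)$ for all pairs $uv$. By the lemma this leaves every non-$S$ edge unchanged and only modifies edges of $S$, so $W = w' - w$ is supported on $S$. Since $w'$ is literally a shortest-path metric, it has no broken cycle, and hence $W$ is a valid solution to MR$(G,\mathbb{R})$ with support contained in $S$. If $G'$ is disconnected, every edge joining two components lies in $S$, so we may replace the infinite cross-component distances by a common constant $M$ larger than all finite $d_{G'}$ values; the resulting function is still a metric and still changes only edges of $S$, and this proves part 1.

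For part 2 the same construction works, and the only additional point is the sign condition $W\ge 0$, i.e.\ $d_{G'}(u,v)\ge w(uv)$ for every $uv\in S$ (off $S$ we already have equality). Suppose instead that $d_{G'}(u,v) < w(uv)$ for some $uv\in S$, and let $P$ be a shortest $u$--$v$ path in $G'$. Then $\{uv\}\cup P$ is a cycle of $G$ whose edge $uv$ has weight exceeding the total weight of the remaining edges, so it is a broken cycle with top edge $uv\in S$ and all bottom edges lying in $P\subseteq E\setminus S$. Thus $S$ contains no bottom edge of this broken cycle, contradicting the hypothesis. Hence $d_{G'}(u,v)\ge w(uv)$ for all $uv\in S$, so $W\ge 0$ and $w'$ is a valid increase-only solution supported on $S$ (taking $M$ large enough also preserves the sign condition across disconnected components).

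I expect the main content to be the key lemma together with the part 2 observation that ``$S$ contains a bottom edge of every broken cycle'' is equivalent to ``no edge of $S$ is the top of a cheap $G'$-cycle,'' since this is precisely what upgrades the generic shortest-path repair $w' = d_{G'}$ into an \emph{increase-only} repair. Verifying that $w'$ introduces no new broken cycles is immediate because $w'$ is a shortest-path metric, and handling a disconnected $G'$ is a routine technicality rather than a genuine obstacle.
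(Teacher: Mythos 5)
Your proposal is correct and follows essentially the same route as the paper's proof of the generalization (Theorem \ref{thm:structure}): delete $S$, observe that the remaining graph $\hat G = (V, E\setminus S, w)$ has no broken cycles so its edges realize shortest paths, repair each edge of $S$ to the shortest-path distance in $\hat G$ (or a large constant across components), and for the increase-only case derive the sign condition from the hypothesis that $S$ hits a bottom edge of every broken cycle. The only cosmetic difference is that you phrase the disconnected case via a uniform constant $M$ where the paper uses $\|w\|_\infty$.
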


\subsection{Generalizations to Graph Metric Repair} 

As with previous work, we will focus on the cases when $\Omega = \mathbb{R}_{\le 0}, \, \mathbb{R}_{\ge 0}$, and $\mathbb{R}$. The goal of the next few subsections is to generalize theorems \ref{thm:specialDecrease}, \ref{thm:specialNP}, \ref{thm:specialStrctr}, and  \ref{thm:oracle} to the case when $G$ is any graph.

\subsubsection{Hardness Results}

As with the MVD and SMR problems the decrease version of the problem can still be solved in cubic time. A more detailed analysis of this and a few additional corollaries can be found in the appendix. For the increase and general cases it is a simple matter to extend Fan, et al.'s hardness results but we shall use the reduction to show a stronger result. Namely, if $\Omega$ contains at least one positive value, then the problem is NP-hard and inherently combinatorial in nature. 

\begin{thm} \label{thm:NP}  
If $0 \in \Omega$ and $\Omega \cap \mathbb{R}_{> 0} \neq \emptyset$ then we have that the problem MR$(G, \Omega)$  is NP Complete. 
\end{thm}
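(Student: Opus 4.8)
The plan is to establish membership in NP and then NP-hardness by a reduction from \textsc{Vertex Cover}. Membership is straightforward: a solution is certified by its support together with a weight assignment drawn from $\Omega$ (which can be taken of polynomial bit-size, e.g.\ shortest-path values), and verifying that $(V,E,w+W)$ has no broken cycle amounts to checking, for every edge $uv$, that $w(uv)+W(uv)$ does not exceed the shortest $u$--$v$ distance using the remaining edges, which is an all-pairs-shortest-path computation and hence polynomial. The substance is the hardness, and the point I want to extract is stronger than Theorem~\ref{thm:specialNP}: the reduction should go through even when $\Omega=\{0,p\}$ for a single positive number $p$, and for \emph{every} $\Omega$ containing $0$ and some $p>0$ regardless of what else (in particular negative values) $\Omega$ contains.

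Given a \textsc{Vertex Cover} instance $(H,t)$ with $H=(V_H,E_H)$, I would build $G$ as the suspension $\nabla H$: keep a copy of $H$, add an apex vertex $x$ adjacent to every vertex of $H$, assign weight $p$ to each original edge of $H$ and weight $0$ to each suspension edge $xv$. A direct check shows the only broken cycles of $(G,w)$ are the triangles $T_e=\{x,u,v\}$ for $e=\{u,v\}\in E_H$, since $w(uv)=p>0=w(xu)+w(xv)$, while every longer cycle and every cycle lying inside $H$ is balanced. For the upper bound, given a vertex cover $C$, increasing the weight of $xv$ by $p$ for each $v\in C$ is a valid repair using only the value $p$: for each $e$ at least one endpoint lies in $C$, so the corresponding suspension edge has weight $p$ and $w(xu)+w(xv)\ge p=w(uv)$; one verifies no new broken cycle appears, so $\mathrm{OPT}\le |C|$.

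For the lower bound I would argue purely combinatorially and independently of $\Omega$. Any valid $W$ must modify at least one edge of every $T_e$, since a cycle none of whose edges are touched retains its weights and stays broken; hence the support of any solution is a transversal of the triangle hypergraph $\{T_e\}_{e\in E_H}$, and $\mathrm{OPT}$ is at least the minimum size of such a transversal. The final step is the lemma that this minimum transversal size equals the minimum vertex cover of $H$: writing a transversal as $C\subseteq V_H$ (the chosen suspension edges $xv$) together with $D\subseteq E_H$ (the chosen original edges), every $e$ is either covered by $C$ or lies in $D$, and a standard exchange argument (replace each $e\in D$ by one of its endpoints, never increasing the size) turns any transversal into a vertex cover of no larger size. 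Combining the bounds yields $\mathrm{OPT}=\mathrm{VC}(H)$, completing the reduction.

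The main obstacle, and the reason the theorem is stated for arbitrary $\Omega$ rather than just $\mathbb{R}_{\ge 0}$, is decoupling the two bounds so that the lower bound never appeals to the structure of $\Omega$. The danger is that a richer $\Omega$ (say one containing negative values, letting one shrink a ``top'' edge, or several positive values) could repair a broken triangle more cheaply or fix several triangles with one modification; the transversal argument neutralizes this because a single modification lies on a single edge and can therefore only repair triangles through that edge, so no extra freedom in $\Omega$ pushes the support below a transversal. The only real care needed is in confirming that the construction introduces no broken cycles other than the triangles $T_e$ (so the hypergraph is exactly $\{T_e\}$) and that the upper-bound repair, which uses only increases by $p$, creates none either; both reduce to short case checks on the cycles through the apex $x$.
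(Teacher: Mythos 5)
Your proposal is correct and follows essentially the same route as the paper: a reduction from Vertex Cover via the suspension $\nabla H$, with two weight levels chosen so that the broken cycles are exactly the apex triangles over the edges of $H$, yielding $\mathrm{OPT}=\mathrm{VC}(H)$. The paper assigns weights $3\alpha$ to the original edges and $\alpha$ to the suspension edges and defers the verification to Fan et al., whereas you use $p$ and $0$ and spell out the transversal/exchange details yourself; both choices support the identical argument.
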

\begin{proof} Let $\alpha \in \Omega \cap \mathbb{R}_{> 0}$. Our goal is to take a general instance of vertex cover and reduce it to MR$(G, \Omega)$. Thus, given a graph $G$ set the weight of all edges in graph $G$ to $3\alpha$. Then, take the suspension $\nabla G$ of $G$ and make the weight of each new edge $\alpha$. At this point, this reduction is the same as presented in \cite{Raichel2018}, hence the proof of the reduction is the same and we omit it.
\end{proof}

\begin{cor} \label{cor:APX} 
If $0 \in \Omega$ and $\Omega \cap \mathbb{R}_{> 0} \neq \emptyset$, then the problem MR$(G, \Omega)$  is APX-hard, and assuming the Unique Games Conjecture, is hard to approximate within a factor of 2 $\epsilon$ for any $\epsilon > 0$. 
\end{cor}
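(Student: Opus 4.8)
The plan is to observe that the reduction constructed in the proof of Theorem~\ref{thm:NP} is not merely a Karp reduction but in fact an approximation-preserving (L-)reduction from \textsc{Vertex Cover}, with both scaling constants equal to $1$; consequently every hardness-of-approximation result known for \textsc{Vertex Cover} transfers verbatim to MR$(G,\Omega)$. Concretely, I would prove that for the instance $\nabla G$ built from a \textsc{Vertex Cover} instance $G$ (every edge of $G$ weighted $3\alpha$, every cone edge incident to the new apex $c$ weighted $\alpha$, where $\alpha \in \Omega \cap \mathbb{R}_{>0}$), the optimum repair size equals the minimum vertex cover number $\tau(G)$.

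For the forward direction, $\tau(G) \ge \mathrm{OPT}_{\mathrm{MR}}$, given a vertex cover $C$ of $G$, for each $u \in C$ add $\alpha \in \Omega$ to the cone edge $uc$, raising its weight to $2\alpha$. Every edge $uv \in E(G)$ has an endpoint in $C$, so its triangle $uvc$ now satisfies $3\alpha \le 2\alpha + \alpha$ and is repaired. One checks, exactly as in the correctness argument underlying Theorem~\ref{thm:NP}, that all remaining cycles of $\nabla G$ (those lying inside $G$, and longer cycles through $c$) stay unbroken, since every edge of $G$ has weight $3\alpha$ while every modified cone edge has weight only $2\alpha$, so a $G$-edge is always the largest edge of any cycle containing both types. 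Hence $C$ yields a valid repair of size $|C|$.

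The reverse direction, $\mathrm{OPT}_{\mathrm{MR}} \ge \tau(G)$, is the main obstacle: given any valid repair with support $S$, I must extract a vertex cover of size at most $|S|$. For each edge $uv \in E(G)$ the broken triangle $uvc$ must be repaired, so $S$ meets $\{uc, vc, uv\}$. Define $C = \{u : uc \in S\}$ together with, for each top edge $uv \in S \cap E(G)$, one chosen endpoint. Then $C$ covers every edge of $G$: if the triangle $uvc$ is repaired through a cone edge the corresponding endpoint lies in $C$, and if it is repaired through the top edge $uv$ then a chosen endpoint of $uv$ lies in $C$. Moreover $|C| \le |S|$, because the cone edges and the $G$-edges partition $E(\nabla G)$, and each cone edge of $S$ contributes exactly one vertex while each top edge of $S$ contributes at most one. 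The delicacy here is that for $\Omega = \mathbb{R}$ a repair may legitimately modify top edges (for instance by decreasing them), so the charging argument must account for those; this is precisely the role of the second term in the definition of $C$.

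Combining the two inequalities gives $\mathrm{OPT}_{\mathrm{MR}}(\nabla G) = \tau(G)$, and the two constructions convert any feasible solution of either problem into a feasible solution of the other of no larger cost. Therefore any $\rho$-approximation for MR$(G,\Omega)$ induces a $\rho$-approximation for \textsc{Vertex Cover}. Since \textsc{Vertex Cover} is APX-hard (Dinur and Safra) and, under the Unique Games Conjecture, is NP-hard to approximate within $2-\epsilon$ for every $\epsilon>0$ (Khot and Regev), the same lower bounds hold for MR$(G,\Omega)$, which proves the corollary.
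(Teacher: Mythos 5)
Your proposal is correct and follows essentially the same route as the paper, which obtains the corollary by observing that the Vertex Cover reduction in Theorem~\ref{thm:NP} (deferred to Fan et al.) is cost-preserving in both directions, so the APX-hardness and the UGC-based $2-\epsilon$ inapproximability of Vertex Cover transfer directly. Your write-up simply makes explicit the two inequalities $\mathrm{OPT}_{\mathrm{MR}}(\nabla G)=\tau(G)$ and the solution mappings that the paper leaves implicit, and the charging argument for the reverse direction (including the case of modified top edges) is sound.
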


While our reduction is essentially the same as that of \citet{Raichel2018}, we provide two new crucial insights. First, taking the suspension of a general $G$ is a natural structure to consider and provides a clean reduction from Vertex Cover to metric repair. Second, we can see that in the extreme case when $\Omega = \{0,\alpha\}$ for $\alpha > 0$ the problem is still NP-Hard and the difficulty comes from the combinatorial side of the problem.

\subsubsection{Structural Results}

Theorems \ref{thm:specialStrctr} and \ref{thm:NP} suggest that the problem is mostly combinatorial in nature. In general, we shall see that the difficult part of the problem is finding the support of of the solution. Given a solution $W$ to the Metric Repair problem, let us define $S_W = \text{supp}(W)$. We shall now present a characterization of the support of solutions to graph metric repair problem that generalizes Theorems \ref{thm:specialStrctr}, \ref{thm:oracle}. Here the key insight that lets us generalize the results is: \begin{enumerate} \item[(0)] If the shortest path between two adjacent vertices is the not the edge connecting them, then this edge is the top edge of a broken cycle \end{enumerate}

\begin{thm} \label{thm:structure} 
For any weighted graph $G = (V,E,w)$ with non negative weights and $S \subset E$, the following hold:
\begin{enumerate}[nosep]
\item $S$ contains an edge from each broken cycle if and only if $S$ is the support to solution to MR$(G, \mathbb{R})$. 
\item $S$ contains a bottom edge from each broken cycle if and only if $S$ is the support to solution to MR$(G, \mathbb{R}_{\ge 0})$.
\end{enumerate} 
\end{thm}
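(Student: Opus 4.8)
The plan is to reduce the metric condition to a statement about shortest paths and then treat the two implications of each part separately, the forward (``only if'') directions being immediate and the reverse (``if'') directions requiring an explicit construction. Throughout I read ``$S$ is the support of a solution'' as ``there is a valid $W$ with $\operatorname{supp}(W)\subseteq S$''. The basic fact I would record first is that $G$ satisfies a metric if and only if every edge $uv$ satisfies $w(uv)=d_G(u,v)$, where $d_G$ is the shortest-path distance: if some edge exceeds the distance between its endpoints, that edge together with a shorter path is a broken cycle (this is exactly insight (0)), and conversely the top edge of a broken cycle exceeds a path, hence exceeds the distance.

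For the ``only if'' directions I would use that whether a cycle $C$ is broken depends only on the weights of the edges of $C$. In the general case, if $W$ is a valid solution with $\operatorname{supp}(W)\subseteq S$ and some broken cycle $C$ avoided $S$, then $w+W$ would agree with $w$ on all of $C$, so $C$ would remain broken in $(V,E,w+W)$, a contradiction; hence $S$ meets every broken cycle. In the increase-only case the same idea applies to the bottom edges: given a broken cycle with top $t$ and bottom set $B$, if $S\cap B=\emptyset$ then the bottoms are unchanged while $w(t)$ can only increase, so $C$ stays broken; hence $S$ must contain a bottom edge of $C$.

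The heart of the argument is the reverse direction, and the key reduction is that the hypothesis makes $G-S$ a metric: since every broken cycle of $G$ meets $S$ (a bottom edge is in particular an edge), no broken cycle lies entirely inside $G-S$, and brokenness is intrinsic to a cycle, so $G-S$ has no broken cycle. Assuming first that $G-S$ is connected, I would define $w'(e)=w(e)$ off $S$ and $w'(uv)=d_{G-S}(u,v)$ on $S$. Because $G-S$ is a metric, in fact $w'(e)=d_{G-S}(\cdot,\cdot)$ for \emph{every} edge, and then the triangle inequality for the metric $d_{G-S}$ shows that every path in the new graph has length at least the $d_{G-S}$-distance of its endpoints while the original $G-S$ shortest path realizes it; thus every edge equals its shortest-path distance and $w'$ is a metric. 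For the increase-only part I would additionally check that this $w'$ only raises the $S$-edges: if instead $d_{G-S}(u,v)<w(uv)$ for some $uv\in S$, then a shortest $G-S$ path together with $uv$ would be a broken cycle whose bottom edges all lie in $G-S$, hence outside $S$, contradicting the hypothesis of part~2. This is the one place where the two hypotheses genuinely diverge.

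The main obstacle I anticipate is the case where $G-S$ is disconnected, since then $d_{G-S}(u,v)=\infty$ for cross-component $S$-edges and the clean formula breaks. I would resolve this by replacing $d_{G-S}$ with the metric $\hat d$ that equals $d_{G-S}$ within a component and a single large constant $M$ across components, chosen larger than every within-component distance and every $w(uv)$; one checks $\hat d$ is a metric, so the same triangle-inequality argument applies, and a cycle through a cross-component $S$-edge must use a second cross-edge, so its bottom sum is at least $M$ and the cycle is not broken. The remaining verifications, that no broken cycle is created by paths using several $S$-edges (absorbed by the triangle inequality) and that cross-edges weighted $M$ constitute increases, are then direct.
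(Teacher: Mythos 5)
Your proposal is correct and follows essentially the same route as the paper's proof: the ``only if'' directions via the observation that a broken cycle avoiding the relevant edges stays broken, and the ``if'' directions by noting $G-S$ has no broken cycles and resetting each $S$-edge to the $G-S$ shortest-path distance (with a large constant across components), with the increase-only claim verified by the same contradiction argument. Your explicit choice of $M$ larger than every within-component distance is in fact a slightly more careful version of the paper's use of $\|w\|_\infty$ for the disconnected case, but the argument is otherwise identical.
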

\begin{proof} A detailed proof is in the Appendix \end{proof}

Furthermore, given a weighted graph $G$ and a potential support $S_W$ for a solution $W$, in polynomial time (cubic in fact) we can determine whether there exists a valid (increase only or general) solution on that support and, if one exists, finds it. 

\begin{algorithm}[h]
\caption{Verifier}
\begin{algorithmic}[1]
\Function{Verifier}{G=(V,E,w),S}
\State $M = \|w\|_\infty$
\State For each $e \in S$ set $w(e) = M$
\State Set $w(v,u)$ to be length of the shortest path from $u$ to $v$ in the new graph
\If{Only edges in $S$ had their weight changed (or increased for increase only case) }
\State \Return w
\Else
\State \Return Not a valid support
\EndIf
\EndFunction
\end{algorithmic}
\end{algorithm}

\begin{prop} \label{prop:verifier} 
The {\sc Verifier} algorithm given a weighted graph $G$ and a potential support for a solution $S$ determine in polynomial (cubic in fact) whether there exists a valid (increase only or general) solution on that support and if one exists finds one. 
\end{prop}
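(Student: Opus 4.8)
The plan is to establish three things about the {\sc Verifier}: (i) \emph{soundness}, that whenever it returns a weighting $w_{\text{new}}$, that weighting is a genuine metric whose set of altered edges lies in $S$ (and consists only of increases in the increase-only case); (ii) \emph{completeness}, that whenever \emph{some} valid solution with support contained in $S$ exists, the test in line~5 succeeds so that {\sc Verifier} never erroneously rejects; and (iii) the cubic running time. Throughout I would use the characterization from Theorem~\ref{thm:structure} together with the key insight that an edge fails the metric exactly when it is the top of a broken cycle, i.e.\ exactly when its weight strictly exceeds the shortest path between its endpoints through the other edges.

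For soundness the crucial observation is that shortest-path distances are themselves a metric. After line~3 overwrites every edge $uv$ by $d(u,v)$, the distance in the intermediate graph, I would show this assignment is a fixed point: for any edge $uv$ the direct edge gives $d_{\text{new}}(u,v) \le w_{\text{new}}(uv)$, while the triangle inequality for the shortest-path metric $d$ gives that every $u$--$v$ path has $w_{\text{new}}$-length at least $d(u,v) = w_{\text{new}}(uv)$, so $d_{\text{new}}(u,v) = w_{\text{new}}(uv)$. Hence no edge is the top of a broken cycle and $w_{\text{new}}$ satisfies a metric; the line~5 test then certifies support $\subseteq S$ and, in the increase case, non-negativity of all changes.

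The main obstacle, and the heart of the argument, is completeness: I must justify that setting the $S$-edges to the \emph{particular} value $M = \|w\|_\infty$ is without loss of generality, so that if any valid solution exists on support $S$ then this canonical one does. The key lemma is a \textbf{saturation} property that exploits non-negativity of the weights: since every weight is at most $M$, any path using at least one $S$-edge (now weighted $M$) has total length at least $M$, hence at least the weight of any single edge. Thus, for the purpose of \emph{shortening} another edge, an $S$-edge at weight $M$ behaves like a deleted (infinite) edge. Using this I would prove that the intermediate weighting $\hat w$ satisfies $d_{\hat w}(u,v) \ge w(u,v)$ for every edge $uv$: fixing a hypothetical valid solution $w^{\ast}$ (which agrees with $w$ off $S$ and is a metric), take a shortest $\hat w$-path $P$ between the endpoints and split into cases---if $P$ uses an $S$-edge its length is $\ge M \ge w(u,v)$, while if $P$ uses only non-$S$ edges then its $\hat w$-length equals its $w^{\ast}$-length, which is at least $d_{w^{\ast}}(u,v) = w^{\ast}(u,v) \ge w(u,v)$ since $w^{\ast}$ is a metric. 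For $uv \notin S$ this forces $d_{\hat w}(u,v) = w(u,v)$ (the direct edge gives the reverse inequality), so line~5 sees no change off $S$; for $uv \in S$ in the increase-only case it gives $d_{\hat w}(u,v) \ge w(u,v)$, so every change is an increase. Either way the check passes.

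Correctness of the running time is then routine: computing $M$ and overwriting the $S$-edges costs $O(|E|)$, the single all-pairs-shortest-path computation in line~3 costs $O(n^3)$ (for instance Floyd--Warshall, valid because the weights are non-negative), and the comparison in line~5 costs $O(|E|)$, for a cubic total. I expect the completeness step---pinning down why the single choice $M = \|w\|_\infty$ suffices and formalizing the saturation lemma---to be the only delicate part, with soundness and the timing analysis being short.
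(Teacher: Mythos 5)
Your proposal is correct and rests on the same key observation as the paper's proof: because every edge of $S$ is saturated to $M=\|w\|_\infty$ and weights are non-negative, any path through an $S$-edge is at least as long as any single edge, so $S$-edges cannot shorten anything and the shortest-path rewrite only fails the line~5 test when there is a short all-non-$S$ path witnessing that no valid solution on $S$ exists. The only difference is framing: you argue completeness directly against a hypothetical valid solution $w^{\ast}$, while the paper argues the contrapositive by exhibiting a broken cycle with no (bottom) edge in $S$ and invoking Theorem~\ref{thm:structure}; these are equivalent.
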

\begin{proof} The key insight here is same as that for Theorem \ref{thm:structure}. The complete detailed proof can be found in the appendix.
\end{proof}

As the insight and theorems show, once we have found the support, the problem can be easily solved. In fact, as the next theorem says, once we know the support the set of all possible solutions on that support is a nice space. 

\begin{thm} \label{thm:compact} For any weighted graph $G$ and support $S$ then we have that the set of solution with support $S$ is a closed convex subset of $\mathbb{R}^{n \times n}$. Additionally, if $G - S$ is a connected graph or we require have an upper bound on the weight of each edge, then set of solutions is compact \end{thm}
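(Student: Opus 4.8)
The plan is to exhibit the set of solutions supported on $S$ as the feasible region of a finite system of linear constraints on the entries of $W$, so that closedness and convexity are immediate, and then to reduce compactness to proving boundedness under either hypothesis.

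First I would list the constraints defining the set. A symmetric matrix $W$ is a solution supported on $S$ exactly when (i) $W(e)\in\Omega$ for every edge and $W(e)=0$ for $e\notin S$; (ii) the repaired weights are non-negative, $w(e)+W(e)\ge 0$; and (iii) $w+W$ has no broken cycle. Since each $\Omega\in\{\mathbb{R}_{\le 0},\mathbb{R}_{\ge 0},\mathbb{R}\}$ of interest is a closed interval containing $0$, the set carved out by (i) is closed and convex, and (ii) is a family of closed linear inequalities. Unwinding the definition of a broken cycle, condition (iii) says that for every simple cycle $C$ and every edge $e\in C$,
\[ w(e)+W(e)\le\sum_{e'\in C\setminus\{e\}}\bigl(w(e')+W(e')\bigr); \]
since the $w$-terms are constant this is linear in the unknown $W$, and as $G$ has only finitely many simple cycles, (iii) defines a polyhedron. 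The solution set is the intersection of these finitely many closed convex sets, hence a closed convex subset of $\mathbb{R}^{n\times n}$, which is the first claim.

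For the second claim I would use that a closed subset of a finite-dimensional space is compact if and only if it is bounded, so it suffices to bound the solution set. A lower bound is immediate from (ii): $W(e)\ge -w(e)$ for every edge, and $W(e)=0$ for $e\notin S$. It remains to bound $W(uv)$ from above for $uv\in S$. Under the upper-bound hypothesis this is direct, as the imposed bound $w(uv)+W(uv)\le U$ gives $W(uv)\le U-w(uv)$. Under the hypothesis that $G-S$ is connected, I would fix for each $uv\in S$ a simple path $P_{uv}$ from $u$ to $v$ inside $G-S$; then $uv$ together with $P_{uv}$ is a cycle of $G$, and because every edge of $P_{uv}$ lies outside $S$ it keeps its original weight. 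Applying (iii) to the edge $uv$ of this cycle yields
\[ w(uv)+W(uv)\le\sum_{e'\in P_{uv}} w(e'), \]
a bound on $W(uv)$ independent of $W$. Combining these two-sided bounds over the finitely many edges shows the solution set is bounded, hence compact.

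The algebraic rearrangements are routine; the one step that needs care is the connected case, where the point is that deleting $S$ leaves a connected graph all of whose edges are \emph{frozen} at their input weights, so each $S$-edge is forced by the triangle inequality to be no longer than a fixed detour through $G-S$. I would also note that some such hypothesis is genuinely needed: if $G-S$ is disconnected and no upper bound is imposed, the solution set can fail to be bounded—for instance an edge of $S$ that is a bridge of $G$ lies on no cycle, so in the increase and general cases its weight is unconstrained from above while the set remains closed and convex.
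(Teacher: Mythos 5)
Your proposal is correct and follows essentially the same route as the paper: both express the solution set as a finite intersection of closed convex (affine and linear-inequality) constraints to get closedness and convexity, and both obtain boundedness in the connected case by routing each $S$-edge around a path in $G-S$ whose edges retain their original weights. Your write-up is somewhat more careful in stating the cycle inequalities for a general graph and in noting (via the bridge example) that some hypothesis is genuinely needed for compactness, but the underlying argument is the same.
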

\begin{proof} A detailed proof is in the Appendix \end{proof}

%!TEX root = ./main.tex
\section{Fixed parameter analysis}
\label{sec:fpt}

Because we reduce from Vertex Cover to show the difficulty of the Metric Repair problem, it might be the case that MR$(G, \Omega)$ is also fixed parameter tractable. \citet{Raichel2018} remarked that the MVD problem might not be fixed parameter tractable, which could explain the current gap in the approximation ratios for these problems. It turns out that not only is the MVD problem fixed parameter tractable but the generalized graph metric problem is also fixed parameter tractable when $G$ is any chordal graph, if we parameterize by the size of the optimal solution. 

\begin{rem} Chordal graphs appear in a variety of different places and several problems that are hard on general classes of graphs are easy on chordal graphs. For example, maximal clique and graph coloring can be solved in polynomial time on chordal graphs \cite{Stacho08}. Chordal graphs also appear in the realm of Euclidean distance matrix and positive semi-definite matrix completion \cite{Gower1985}. \end{rem}
\begin{rem} Finding maximal chordal subgraphs and minimal chordal supergraphs is NP-hard \cite{YM1, YM2} \end{rem}

\subsection{Increase Only Case}

We shall start by focusing on the increase version of the problem. There are four major insights that let us develop a FPT algorithm for MR($G, \mathbb{R}_{\ge 0}$). These insights are crucial as they reduce the search space and let us \emph{recursively build the support while simultaneously expanding the search space} \begin{enumerate}

\item[(1)] The {\sc Verifier} tells us given a support we can determine if it is a valid support in polynomial time. See Proposition \ref{prop:verifier}.

\end{enumerate}
\noindent Now the naive approach would be to look at all subsets of edges of size $k$ and check if they are valid supports. But there are $\Theta(n^{2k})$ many such subsets so a brute force search is not a valid FPT algorithm. Thus, we want to reduce the search space. 
\begin{enumerate}
\setcounter{enumi}{1}

\item[(2)] If an edge $e$ is a bottom edge in more than $k$ broken 3-cycles then $e$ must be in the support of all optimal solutions.  See Lemma \ref{lem:opt}.

\end{enumerate}

\noindent The next two insights then let us recursively build the support while simultaneously expanding the search space. 

\begin{enumerate}
\setcounter{enumi}{2}

\item[(3)] In a chordal graph the presence of a broken cycle implies the presence of a broken 3-cycle. This allows us to recursively pick edges for the support until we have no more broken 3-cycles. See Lemma \ref{lem:chordal}.

\item[(4)] If $e$ is an edge in the support of an optimal solution $W$ and we modify $w(e) \leftarrow W(e)+w(e)$ and make no other changes, then $e$ cannot be a top edge in more than $k$ broken 3-cycles. See Lemma \ref{lem:recurse}.
\end{enumerate}

\begin{algorithm}[h]
\caption{FPT algorithm Increase Metric Repair (FPIMR)}
\label{alg:fptInc}
\begin{algorithmic}[1]
\Function{FPIMR}{G=(w,V),k}
	\State  $S = \emptyset$
	\State $T = $ Set of all broken 3-cycles in $G$
	\State For all edges $e$ in more than $k$ broken triangles as a bottom edge add $e$ to $S$
	\State $P = \emptyset$
	\State For each triangle in $T$ whose bottom edges are not in $S$, add both bottom edges to $P$
	\For{each $e = (ij) \in S$}		\State Sort $\{|w_{il}-w_{jl}| l = 1,\hdots, n\}$
		\State For the $k$ biggest entries add edges add the edge with bigger weight from $il, lj$ to $P$
	\EndFor
	\State S = \Call{Cover}{$(G,S,P,k)$}
	\State \Return{$S$}
\EndFunction

\end{algorithmic}
\end{algorithm}

\begin{algorithm}[h]
\caption{Cover for FPT algorithm Increase Only Case}
\label{alg:coverInc}
\begin{algorithmic}[1]
\Function{Cover}{G=(w,V),S,P,k}
	\If{k=|S|}
		\State \Return \Call{verifier}{G,S}
	\EndIf
	\For{new $e = (ij) \in S$}  \Comment{Implemented by storing 1 bit extra per edge in $S$ as an indicator}

		\State Sort $\{w_{il}+w_{lj} | l = 1, \hdots, n\}$
		\State For the smallest $k$ entries add edge $il, jl$ to $P$
	\EndFor
	\For{$e \in P$}
		\State S = \Call{Cover}{($G,S \cup \{e\}, P-\{e\}, k$)}
		\If{$S \neq$ NULL}
			\State \Return $S$
		\EndIf
	\EndFor
	\State \Return NULL
\EndFunction
\end{algorithmic}
\end{algorithm}

The general structure of the algorithm will be as follows. We shall have two sets $S$ which will contain the edges that we are currently considering as the \textit{support} and $P$ which will contain the edges that we could \textit{potentially} add to $S$. We will start by adding edges that have to be in the support of all optimal solutions to $S$. We then want to recursively keep adding edges to $S$ from $P$ while simultaneously also expanding $P$. We prove our results in a series of lemmas that correspond to our insights. The detailed proofs for these lemmas can be found in the appendix. 

\begin{lem} \label{lem:opt} The edges added to $S$ on line 4 of {\sc FPIMR} are in the support of all optimal solutions. \end{lem}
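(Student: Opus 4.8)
The plan is to argue by contradiction using the structural characterization of increase-only supports in Theorem~\ref{thm:structure}. Suppose $e$ is a bottom edge in more than $k$ broken $3$-cycles, yet there is an optimal solution $W$ (so $|S_W| = \mathrm{OPT} = k$) with $e \notin S_W$. Since $W$ is a valid increase-only solution, part~2 of Theorem~\ref{thm:structure} guarantees that $S_W$ contains a bottom edge of \emph{every} broken cycle, and in particular a bottom edge of each of the more than $k$ broken $3$-cycles in which $e$ appears as a bottom edge. This is the one place where I rely on the increase-only regime: a broken $3$-cycle must be repaired by touching one of its \emph{bottom} edges rather than its top edge.

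Next I would set up the counting. Write $e = ij$. Each broken $3$-cycle in which $e$ is a bottom edge is a triangle on vertex set $\{i,j,l\}$ for some third vertex $l$, and distinct such triangles correspond to distinct third vertices $l$ (a triangle containing both $i$ and $j$ is pinned down by its third vertex). In such a triangle the top edge is either $il$ or $jl$, so its two bottom edges are $e$ together with a second edge $f_l \in \{il, jl\}$. Because $e \notin S_W$ while $S_W$ must hit a bottom edge of this triangle, we are forced to have $f_l \in S_W$ for every one of these triangles.

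The heart of the argument is then the observation that the edges $f_l$, ranged over the distinct third vertices $l$, are pairwise distinct: each $f_l$ is incident to its own third vertex $l$, and none equals $e$ (since $l \neq i,j$). Hence $S_W$ contains one distinct edge $f_l$ for each of the more than $k$ triangles, forcing $|S_W| \ge k+1$, which contradicts $|S_W| = k$. I therefore conclude that $e \in S_W$ for every optimal solution $W$. The genuinely routine ingredients are the invocation of Theorem~\ref{thm:structure} and the basic triangle geometry; the step I expect to be the main obstacle (and would verify most carefully) is the distinctness claim for the $f_l$ — one must rule out that a single edge serves as the ``other'' bottom edge for two different triangles, which holds precisely because each triangle through $e = ij$ is determined by its third vertex and $f_l$ is incident to that vertex.
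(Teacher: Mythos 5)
Your proposal is correct and follows essentially the same route as the paper's own proof: the paper likewise observes that the more-than-$k$ broken triangles through $e$ are pairwise edge-disjoint apart from $e$ itself, so an optimal solution avoiding $e$ would need a distinct edge from each, exceeding size $k$. Your version is slightly more explicit in pinning down, via part~2 of Theorem~\ref{thm:structure}, that the forced edge is the \emph{other bottom edge} $f_l$ of each triangle, but the counting argument is the same.
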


\begin{lem} \label{lem:chordal} If $G$ is a chordal weighted graph that has a broken cycle then $G$ has a broken triangle. \end{lem}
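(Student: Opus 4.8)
The plan is to argue by taking a broken cycle of minimum length and showing it must be a triangle. Concretely, I would let $C = v_1 v_2 \cdots v_m v_1$ be a broken cycle with the fewest edges, and assume toward a contradiction that $m \ge 4$. After relabeling, I may assume its top edge is $v_m v_1$, so that $w(v_m v_1) > \sum_{k=1}^{m-1} w(v_k v_{k+1})$, where the sum runs over the bottom edges forming the path $v_1, v_2, \dots, v_m$.

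Since $m \ge 4$ and $G$ is chordal, the cycle $C$ has a chord $v_i v_j$ with $i < j$ and $v_i, v_j$ non-adjacent on $C$; in particular $2 \le j - i \le m-2$, so the chord is distinct from the top edge $v_m v_1$. This chord splits $C$ into two strictly shorter cycles: $C_1 = v_i v_{i+1} \cdots v_j v_i$, which uses the path edges $v_i v_{i+1}, \dots, v_{j-1} v_j$ together with the chord, and $C_2 = v_j v_{j+1} \cdots v_m v_1 \cdots v_i v_j$, which uses the remaining path edges, the top edge, and the chord. A quick count shows each of $C_1, C_2$ has between $3$ and $m-1$ edges, and the original top edge $v_m v_1$ lies in $C_2$.

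The key step is a dichotomy on the chord weight $c = w(v_i v_j)$ against $S_1 = \sum_{k=i}^{j-1} w(v_k v_{k+1})$, the total weight of the bottom edges that $C_1$ inherits. Writing $B_2$ for the total weight of the bottom (path) edges lying in $C_2$, we have $B_2 + S_1 = \sum_{k=1}^{m-1} w(v_k v_{k+1})$, so $w(v_m v_1) > B_2 + S_1$. If $c \le S_1$, then $w(v_m v_1) > B_2 + S_1 \ge B_2 + c$, and since $B_2 + c$ is exactly the total weight of the edges of $C_2$ other than $v_m v_1$, the cycle $C_2$ is broken with top edge $v_m v_1$. If instead $c > S_1$, then the chord weight exceeds the sum $S_1$ of the remaining edges of $C_1$, so $C_1$ is broken with the chord as its top edge. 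In either case we obtain a broken cycle strictly shorter than $C$, contradicting minimality; hence $m = 3$ and $C$ is a broken triangle.

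I expect the main obstacle to be handling the chord's own weight cleanly: the sub-cycle $C_2$ that inherits the original top edge also contains the chord, whose weight is not controlled a priori, so brokenness is not automatically inherited from $C$. The dichotomy above is precisely what resolves this — a chord heavy enough to spoil $C_2$ is exactly a chord that makes $C_1$ broken on its own. The remaining work (verifying the edge counts that make both sub-cycles shorter than $C$, and that the chord is never the top edge) is routine bookkeeping, and the non-negativity of the weights in the hypothesis is not even needed for the dichotomy itself.
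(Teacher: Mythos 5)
Your proof is correct and follows essentially the same route as the paper's: take a minimum-length broken cycle, use chordality to produce a chord, and observe that whichever side of the dichotomy the chord weight falls on, one of the two sub-cycles is a strictly shorter broken cycle. Your write-up is in fact slightly more careful than the paper's (which phrases the dichotomy as ``we must have $w(v_i,v_j)\le\sum_{e\in P_2}w(e)$, otherwise a smaller broken cycle'' and then derives the contradiction from the other sub-cycle), but the underlying argument is identical.
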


\begin{lem} \label{lem:recurse} If $W$ is an optimal solution and $S \subsetneq S_W$, then $\exists\, e \in S_W \cap P$ such that $e \not \in S$. \end{lem}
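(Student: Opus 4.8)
The plan is to show that whenever the current support $S$ is a strict subset of an optimal support $S_W$, the candidate set $P$ must already contain a bottom edge lying in $S_W \setminus S$. I would work with the \emph{partial} weighting $G'$ obtained from $G$ by applying $W$ only to the edges of $S$, i.e. $w'(e) = w(e) + W(e)$ for $e \in S$ and $w'(e) = w(e)$ otherwise; this is well defined because $S \subseteq S_W$ and $W$ is increase-only. The first step is to produce a broken triangle: if $G'$ had no broken cycle, then $S$ would itself be a valid support, and since $|S| < |S_W| = \mathrm{OPT}$ this contradicts optimality. Hence $G'$ has a broken cycle, and since altering weights leaves the edge set (and therefore chordality) unchanged, Lemma \ref{lem:chordal} yields a broken triangle $T$ in $G'$ with top edge $t$ and bottom edges $b_1,b_2$.

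Next I would locate the witness edge by comparing $G'$ with the fully repaired graph $G+W$. Because $W$ only increases weights, the top weight of $T$ can only grow, so repairing $T$ forces $w_{\mathrm{fin}}(b_1)+w_{\mathrm{fin}}(b_2) \ge w_{\mathrm{fin}}(t) \ge w'(t) > w'(b_1)+w'(b_2)$; thus some bottom edge $b$ strictly increases from $G'$ to $G+W$. An edge can increase there only if it lies in $S_W \setminus S$, since edges of $S$ already hold their final value in $G'$ and edges outside $S_W$ never move. This $b \in S_W \setminus S$ is exactly the candidate required, provided I can show $b \in P$.

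The crux -- and the step I expect to be the main obstacle -- is establishing $b \in P$, which requires matching the algorithm's ``$k$-truncations'' against the counting content of insight (4). I would split on whether $T$ is broken already in the original $G$. If it is, then $T$ is enumerated at initialization and its bottom edges are deposited into $P$ by \textsc{FPIMR}; since in \textsc{Cover} an edge leaves $P$ only by entering $S$, and $b \notin S$, it follows that $b$ is still in $P$. If instead $T$ becomes broken only in $G'$, then because only increases occur its top $t$ must have been raised, so $t \in S$, and comparing $G'$ with the single-edge modification shows that $T$ is already broken when $t$ alone is raised to its $W$-value. Here I would prove the sub-claim behind insight (4): since $t \in S_W$, raising $t$ alone makes it the top of at most $k$ broken triangles, because each such triangle $\{t\text{-endpoints},l\}$ must be repaired in $W$ by strictly increasing one of its two bottom edges, these bottom edges are pairwise distinct across the different third vertices $l$, and together with $t$ they would otherwise force $|S_W| > k = \mathrm{OPT}$. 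Consequently all triangles with top $t$ sit among the $k$ smallest sums $w_{il}+w_{lj}$, so when $t$ entered $S$ the routine \textsc{Cover} added $T$'s bottom edges -- in particular $b$ -- to $P$.

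The delicate bookkeeping I would flag is the boundary case in which one bottom edge of $T$ already lies in $S$ (so the triangle is only partially covered). I would resolve it by choosing $b$ to be the \emph{strictly increased} bottom edge from the second paragraph, which cannot coincide with an edge already frozen at its final value in $S$, and then checking that the responsible step -- the initialization enumeration in the first case, or the processing of $t$ in the second -- still deposits precisely this $b$ into $P$. Combining the two cases gives $b \in (S_W \cap P) \setminus S$, which is the assertion of the lemma.
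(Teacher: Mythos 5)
Your overall strategy is the same as the paper's: pass to the partially repaired graph $G_S$, invoke Lemma \ref{lem:chordal} to extract a broken triangle $T$, exhibit a bottom edge $b$ of $T$ that must still be raised (hence $b\in S_W\setminus S$), and then argue via the counting content of insights (2) and (4) that $b$ was deposited into $P$. Your identification of the witness $b$ as a \emph{strictly increased} bottom edge is a slightly cleaner version of the paper's ``at least one bottom edge is not in $S$'' step, and your handling of the case where no edge of $T$ lies in $S$ (initialization, line 6 of {\sc FPIMR}) and of the case where the top edge lies in $S$ (the at-most-$k$-broken-triangles argument for a raised top edge, matching lines 7--8 of {\sc Cover}) coincides with the paper's Cases 1 and 2.

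There is, however, a genuine gap in the remaining case, which you flag as ``delicate bookkeeping'' but do not actually close: $T$ is broken already in the original $G$, its top edge is \emph{not} in $S$, but one of its bottom edges $b'$ \emph{is} in $S$ --- in particular $b'$ may have been placed in $S$ at initialization because it is a bottom edge of more than $k$ broken triangles. In that situation line 6 of {\sc FPIMR} explicitly skips $T$ (it only enumerates broken triangles \emph{whose bottom edges are not in $S$}), so the ``initialization enumeration'' you appeal to does not deposit $b$ into $P$; and since the top edge is not in $S$, the ``processing of $t$'' you offer as the alternative is not available either. The paper's proof handles this as its Case 3 via a \emph{separate} mechanism and a separate counting argument: for each edge $e=b'\in S$ one sorts the differences $|w_{il}-w_{jl}|$ and keeps the $k$ largest (lines 7--9 of {\sc FPIMR}), justified by the observation that once $b'$ is raised to its final value it can remain a deficient bottom edge in at most $k$ broken triangles (these triangles being pairwise edge-disjoint apart from $b'$), so the companion bottom edge of $T$ is among those retained. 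Your proposal never identifies this step of the algorithm or supplies the corresponding bound, so as written the argument does not establish $b\in P$ in this case. To repair it, add this third case explicitly and prove the ``bottom-edge'' analogue of insight (4) just as you proved the top-edge version.
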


\begin{lem} \label{lem:bound} At all stages of the {\sc FPIMR} and {\sc Cover} algorithms we have that $|P| \le 5k^2$. \end{lem}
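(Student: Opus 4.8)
The plan is to bound $|P|$ by accounting separately for every line of {\sc FPIMR} and {\sc Cover} that inserts edges into $P$, since those are the only places $P$ changes (the recursion in {\sc Cover} passes $P-\{e\}$, which only removes edges). There are exactly three insertion sites: line~6 of {\sc FPIMR}, the loop over $e\in S$ in {\sc FPIMR} using the sorted values $|w_{il}-w_{jl}|$, and the ``for new $e$'' loop in {\sc Cover} using the sorted values $w_{il}+w_{lj}$. I would show these contribute at most $2k^2$, $k^2$, and $2k^2$ edges respectively, so that $|P|\le 5k^2$ at every stage. Throughout I fix an optimal solution $W$ with $|S_W|=k$; by Lemma~\ref{lem:opt} the set $S_0$ of edges added on line~4 satisfies $S_0\subseteq S_W$, hence $|S_0|\le k$, and by construction of line~4 every edge outside $S_0$ is a bottom edge of at most $k$ broken triangles.

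The main obstacle is controlling the line~6 contribution, since a priori the graph may contain $\Omega(n^3)$ broken triangles. Let $T'$ be the set of broken triangles both of whose bottom edges lie outside $S_0$; line~6 adds only the two bottom edges of each triangle in $T'$, so it suffices to prove $|T'|\le k^2$. For this I would argue that by Theorem~\ref{thm:structure}(2) the support $S_W$ contains a bottom edge of every broken triangle, and for a triangle in $T'$ that bottom edge, lying in $S_W$ but not in $S_0$, must belong to $S_W\setminus S_0$. Mapping each triangle of $T'$ to such a bottom edge gives a map into $S_W\setminus S_0$; each target edge has at most $k$ preimages because it is a bottom edge of at most $k$ broken triangles (it is not in $S_0$). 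Since $|S_W\setminus S_0|=k-|S_0|\le k$, this yields $|T'|\le k\cdot|S_W\setminus S_0|\le k^2$, so line~6 inserts at most $2k^2$ edges.

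The other two sites are routine counting. The loop over $e\in S$ in {\sc FPIMR} inserts at most $k$ edges (one per each of the $k$ largest $|w_{il}-w_{jl}|$) for each of the $|S|=|S_0|\le k$ edges, contributing at most $k^2$. For {\sc Cover}, the key observation is that the one-bit indicator guarantees each edge ever placed in $S$ is handled by the ``for new $e$'' loop exactly once, and that loop inserts $2k$ edges (namely $il$ and $jl$ for each of the $k$ smallest $w_{il}+w_{lj}$) per handled edge. Along any root-to-leaf branch of the recursion, $|S|$ grows from $|S_0|$ to $k$, so exactly $k$ edges are ever handled, contributing at most $2k\cdot k=2k^2$ insertions along that branch. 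Because insertions from {\sc Cover} accumulate along a branch while the $P-\{e\}$ passes only remove edges, the total present at any stage is bounded by the sum of the three insertion counts, giving $|P|\le 2k^2+k^2+2k^2=5k^2$. The delicate point to verify carefully is that these three budgets do not double-count along the recursion, i.e.\ that the $2k^2$ from {\sc Cover} is a per-branch total rather than a per-node one, which is exactly what the single-use indicator bit ensures.
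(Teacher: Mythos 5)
Your proposal is correct and follows essentially the same route as the paper's proof: the same three insertion sites with budgets $2k^2$, $k^2$, and $2k^2$, and the same key counting step that maps each broken triangle with no bottom edge in $S$ to a bottom edge of the optimal support, each such edge accounting for at most $k$ triangles since it was not added on line~4. Your write-up is somewhat more careful than the paper's (making explicit the use of Theorem~\ref{thm:structure}(2) and the per-branch accounting guaranteed by the indicator bit), but the argument is the same.
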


\begin{thm} \label{thm:FPT1} If we restrict $G$ to being a chordal graph. Then MR$(G, \mathbb{R})$ is fixed parameter tractable when parameterized by the size of the optimal solution and can be solved in $O((12k^2)^kn^3)$ time by the {\sc FPIMR} algorithm. \end{thm}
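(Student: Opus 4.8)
The plan is to prove the theorem in two parts: \textbf{correctness} --- that \textsc{FPIMR} returns a valid support exactly when the increase-only version of the problem has a solution of the target size --- and \textbf{running time} --- that the branching search it performs has at most $(12k^2)^k$ leaves, each costing $O(n^3)$. I will treat $k = \mathrm{OPT}$ as given (equivalently, run \textsc{FPIMR} for $k = 1, 2, \ldots$ and return at the first success, in which case the total cost is dominated by the final call). Throughout, the contract I rely on is Proposition \ref{prop:verifier}: the subroutine \textsc{Verifier} decides in $O(n^3)$ time whether a candidate edge set $S$ is the support of an increase-only solution, and Theorem \ref{thm:structure} characterizes such supports as exactly those containing a bottom edge of every broken cycle. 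In particular, supersets of valid supports are valid, a monotonicity fact I will use to handle the stopping condition.

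For correctness I will argue soundness and completeness separately. \emph{Soundness} is immediate: the algorithm only ever returns the output of \textsc{Verifier} (at the line $k = |S|$ in \textsc{Cover}), so anything returned is certified to be the support of a genuine increase-only solution. \emph{Completeness} is the substantive direction. Fixing an optimal solution $W$ with support $S_W$, $|S_W| = k$, I will exhibit a root-to-leaf path in the recursion tree along which the invariant $S \subseteq S_W$ is maintained. The base case is line 4 of \textsc{FPIMR}: by Lemma \ref{lem:opt} every edge forced into $S$ there lies in $S_W$, so the initial set $S_0 \subseteq S_W$. For the inductive step, whenever the current $S \subsetneq S_W$, Lemma \ref{lem:recurse} guarantees an edge $e \in (S_W \cap P)\setminus S$, so the branch of the loop over $e \in P$ that selects this $e$ preserves $S \subseteq S_W$ and strictly increases $|S|$. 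Since $|S|$ grows by one per level and is bounded by $k$, after $k - |S_0|$ steps the path reaches $S = S_W$ with $|S| = k$, at which point \textsc{Verifier} certifies it. Chordality enters through Lemma \ref{lem:chordal}: it is what makes broken $3$-cycles (whose bottom edges seed and grow $P$ in lines 5 and 6 of \textsc{FPIMR} and in \textsc{Cover}) a faithful proxy for all broken cycles, which is precisely what validates the hypotheses of Lemmas \ref{lem:opt} and \ref{lem:recurse}. Conversely, when $k < \mathrm{OPT}$, no valid support of size $\le k$ exists, so every leaf fails verification and the algorithm correctly returns NULL.

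For the running time I will bound the size of the recursion tree. The branching factor at each node is $|P|$, and by Lemma \ref{lem:bound} we have $|P| \le 5k^2 \le 12k^2$ at every stage; the recursion depth is at most $k$, since each recursive call in the loop over $e \in P$ adds exactly one edge to $S$ and recursion halts once $|S| = k$. Hence the tree has at most $(12k^2)^k$ leaves. The work at each node is dominated by the $O(n^3)$ \textsc{Verifier} call at the leaves; the auxiliary sorting steps that expand $P$ cost only $O(n\log n)$ per node, and the initial enumeration of all broken $3$-cycles costs $O(n^3)$. Multiplying the leaf count by the per-leaf cost gives the claimed $O((12k^2)^k n^3)$ bound, which is FPT in $k$.

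The main obstacle is the completeness argument, and specifically verifying that the two $P$-expansion rules (the $|w_{il}-w_{jl}|$ rule in \textsc{FPIMR} and the smallest-$k$ sums rule in \textsc{Cover}) really do deposit a correct support edge into $P$ at every step where $S \subsetneq S_W$ --- this is the entire force of Lemma \ref{lem:recurse} and is where chordality and the structural characterization must be combined carefully. A secondary subtlety is the stopping condition $k = |S|$: because an optimal support has size exactly $\mathrm{OPT}$, no padding is needed when the parameter equals $\mathrm{OPT}$, and for the search-over-$k$ version I invoke monotonicity of valid supports (Theorem \ref{thm:structure}) together with the fact that the first successful $k$ equals $\mathrm{OPT}$.
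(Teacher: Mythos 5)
Your proposal is correct and follows essentially the same route as the paper: Lemma \ref{lem:opt} seeds $S \subseteq S_W$, Lemma \ref{lem:recurse} drives the inductive extension along a branch of the recursion tree, Lemma \ref{lem:bound} bounds the branching factor, and Proposition \ref{prop:verifier} certifies the leaves, with the same $(|P|)^k \cdot O(n^3)$ accounting. Your added care about soundness, the unknown-$k$ wrapper, and the slack between the $5k^2$ bound and the stated $12k^2$ are all consistent with (and slightly more explicit than) the paper's argument.
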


\begin{proof}We have already proved all the major pieces we just need to put them together.  Lemma \ref{lem:opt} tells us that when we initially call {\sc Cover} we have that $S \subset S_W$ for any optimal solution $W$. Lemma \ref{lem:recurse} then tells us we can continue recursively until we have found the support of an optimal solution. Finally, Proposition \ref{prop:verifier}, tells us that the {
\sc Verifier} finds an optimal solution.

Now we need to show that the algorithm runs in $O((5k^2)^kn^3)$ time. Lemma \ref{lem:bound} tells us that $|P| \le 5k^2$ at all times. Thus, at each recursive stage we have a branch factor of at most $5k^2$. Now we also know that our recursive depth is at most $k$. Thus we have at most $(5k^2)^k$ nodes in our recursion tree. At each non terminal non root node we do $O(n\log(n))$ work, at the root node of the tree  we do $O(kn\log(n))$ work, and at each terminal node we do $O(n^3)$ work. Thus, we see that the {\sc Cover} algorithm runs in $O((5k^2)^kn^3 + kn\log(n)) =O((5k^2)^kn^3)$. We also do some work before we start the recursive procedure but this can clearly be done in $O((5k^2)^kn^3)$ time.  
\end{proof}

\subsection{General Problem is FPT For Chordal Graphs}

Not only is the increase only problem FPT, but also the general problem. To analyze this more general problem, we modify insights (2) and (4) slightly. 

\begin{enumerate}
\setcounter{enumi}{1} 
\item[(2')] If any edge $e$ is in more than $k$ broken cycles as a \emph{top edge or as bottom edge}. Then it must be in all optimal solutions. 
\setcounter{enumi}{3}
\item[(4')] If $e$ is an edge in an optimal solution $W$ and we modify $w(e) \leftarrow W(e)+w(e)$ and make no other changes, then $e$ cannot be the \emph{top edge in more than $k$ broken triangles} and $e$ cannot be a \emph{bottom edge in more than $k$ broken triangles}. 
\end{enumerate}

A complete analysis, including the algorithm and detailed proofs for the general case can be found in the appendix. In particular we get the following result:

\begin{thm}  \label{thm:FPT2} If we restrict $G$ to being a chordal graph. Then MR$(G, \mathbb{R})$ is fixed parameter tractable when parameterized by the size of the optimal solution and can be solved in $O((12k^2)^kn^3)$ time. \end{thm}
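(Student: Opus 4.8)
The plan is to mirror the structure of the proof of Theorem~\ref{thm:FPT1} almost verbatim, replacing the four increase-only insights by their general analogues and, in particular, substituting insights~(2') and~(4') for~(2) and~(4). The skeleton is unchanged: maintain a committed support $S$ (initialized with edges forced into every optimal solution) together with a candidate pool $P$, then recurse by moving one candidate edge at a time from $P$ into $S$ while simultaneously re-populating $P$, halting the recursion once $|S| = k$ and calling {\sc Verifier} to test the support. Since Proposition~\ref{prop:verifier} already certifies {\sc Verifier} for the general case, the terminal test costs $O(n^3)$ exactly as before.

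First I would carry over Lemma~\ref{lem:chordal} without change: the statement that a broken cycle in a chordal graph forces a broken triangle is purely combinatorial and does not reference the sign of the permitted repairs, so it applies identically here and guarantees that, whenever the current weighting is not yet a metric, there is a broken triangle for the algorithm to act on. Next I would prove the general analogue of Lemma~\ref{lem:opt} from insight~(2'): if an edge $e$ lies in more than $k$ broken triangles as a top \emph{or} bottom edge, it is in every optimal support. The argument is the counting argument that distinct broken triangles through a fixed edge $e = ij$ correspond to distinct third vertices $l$ and hence meet only in $e$; so if $e \notin S$, each such triangle must be repaired through one of its other two edges, and these are pairwise distinct across triangles, forcing $|S| > k$, a contradiction. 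Because Theorem~\ref{thm:structure}(1) tells us that a support for the general problem is exactly a set hitting every broken cycle, this correctly identifies forced edges.

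The heart of the argument is the general version of the recursion lemma (the analogue of Lemma~\ref{lem:recurse}): whenever $S \subsetneq S_W$ for an optimal $W$, some edge of $S_W \setminus S$ already lies in the candidate pool $P$. This is where insight~(4') does the work. After committing to the optimal increments on the edges of $S$, insight~(4') bounds a newly committed edge from being a top edge in more than $k$ broken triangles and, separately, from being a bottom edge in more than $k$ broken triangles; this is what guarantees that the $O(k)$ candidates we enumerate per committed edge (by sorting $\{w_{il}+w_{lj}\}$ and $\{|w_{il}-w_{jl}|\}$ over $l$ and keeping the extreme $k$ entries) necessarily contain an edge of the remaining support. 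I expect this step to be the main obstacle: in the increase-only case weights move only upward, so committing an edge can only destroy broken triangles, whereas in the general case a decrease on $e$ can turn $e$ into a top edge of newly broken triangles, so one must use both halves of~(4') to control candidate generation in both directions and verify that no support edge escapes $P$.

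Finally I would establish the general analogue of Lemma~\ref{lem:bound}, namely $|P| \le 12 k^2$ throughout: the initial population contributes $O(k^2)$ candidates, and thereafter each of the at most $k$ committed edges contributes $O(k)$ candidates in each of the now-doubled top/bottom directions, which is precisely where the constant grows from $5$ to $12$. With $|P| \le 12k^2$ the branch factor is at most $12k^2$ and the recursion depth is at most $k$, giving at most $(12k^2)^k$ recursion nodes; charging $O(n^3)$ to each terminal {\sc Verifier} call and $O(n \log n)$ of candidate-sorting work to each internal node yields the claimed $O((12k^2)^k n^3)$ running time.
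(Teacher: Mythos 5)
Your overall plan is exactly the paper's: reuse Lemma~\ref{lem:chordal} verbatim, prove general analogues of Lemmas~\ref{lem:opt}, \ref{lem:recurse} and~\ref{lem:bound} from insights~(2') and~(4'), and run the same $S$/$P$ recursion with {\sc Verifier} at the leaves; the counting arguments and the $O((12k^2)^k n^3)$ accounting all match the appendix (Lemmas~\ref{lem:optg}, \ref{lem:recurseg}, \ref{lem:bounbg}).

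There is, however, one case of the recursion lemma that your candidate-generation rule does not cover. In the general setting a broken triangle $T$ of $G_S$ can have \emph{two} of its edges already in $S$ (both corrected to their optimal weights) and still be broken, which forces the third edge into $S_W$; this configuration cannot arise in the increase-only case, where having both bottom edges corrected yet still a broken triangle contradicts $S_W$ being a valid support. Your pool $P$ is populated only by sorting $\{w_{il}+w_{lj}\}$ and $\{|w_{il}-w_{jl}|\}$ over the \emph{original} weights for each committed edge $(ij)$, and when the other edge of $T$ incident to $(ij)$ has itself been modified, the original-weight sort need not place $T$ among the extreme $k$ entries, so the third edge can escape $P$ and the recursion lemma fails as you have stated it. The paper's Algorithm~\ref{alg:cover} closes this hole with an extra rule: whenever two edges of $S$ share an endpoint, the edge completing that triangle is added to $P$ unconditionally. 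That rule contributes at most $k^2$ further candidates (one per pair of edges in $S$) and is one of the terms that brings the bound to $12k^2$, which your accounting of ``initial $O(k^2)$ plus $O(k)$ per committed edge in each direction'' does not reproduce. With that case and that rule added, your argument coincides with the paper's and the theorem follows as you state.
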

%\input{approx}
%\input{generalization}
%!TEX root = ./main.tex
\section{Approximation analysis}
\label{sec:approx}

\citet{Raichel2018} and \citet{Gilbert2017} both present algorithms for the special case of the graph metric repair problem when $G =K_n$ and $\Omega =\mathbb{R}_{\ge 0}$. These algorithms, however, depend on the graph $G$ being a complete graph and can not be used for the graph metric repair problem. In this section, we will present an $L$-approximation algorithm for the increase case and an $L+1$-approximation algorithm for the general case that both run in $O(n^3({\rm OPT}+1))$ time. We first give a comparative analysis of the approximation algorithms for this special case. We also show that with a slight modification we can reduce the running time of the algorithm from \cite{Raichel2018} to $\Theta(n^5)$ with the same approximation ratio. Additionally, we shall show that {\sc IOMR-fixed} from \cite{Gilbert2017} is an $\Omega(n)$ approximation algorithm, disproving their conjecture that {\sc IOMR-fixed} is a 2 approximation algorithm.

\subsection{Comparison of approximation algorithms for $G = K_n$ and $\Omega = \mathbb{R}_{\ge 0}$}

Table \ref{table:alg} summarizes all known approximation algorithms for MR$(K_n, \mathbb{R}_{\ge 0})$.

\begin{table*}[h]
\centering
\begin{tabular}{|c|c|c|c|} \hline
Algorithm & Running time & Approximation Ratio & Reference \\ \hline \hline 
 & $\Theta(n^6)$ & $O({\rm OPT}^{1/3})$ & \citet{Raichel2018} \\ \hline
{\sc SPC} & $O(n^3( {\rm OPT}+1))$ & $L$ & This paper \\ \hline
{\sc 5-Cycle Cover} & $\Theta(n^5)$ & $O({\rm OPT}^{1/3})$ & This paper \\ \hline
{\sc IOMR-fixed} & $\Theta(n^3)$ & $\Omega(n)$ & \citet{Gilbert2017} \\ \hline 
\end{tabular}
\caption{For $G=K_n$, $\Omega = \mathbb{R}_{\ge 0}$, running times and approximation ratios for all known algorithms.}
\label{table:alg}
\end{table*}

To compare these algorithms let us start by first considering the length of the longest broken cycle. Suppose the length of the longest broken cycle is $o({\rm OPT}^{1/3})$, then {\sc SPC} gives the best approximation ratio. While if we have many broken cycles of length $\omega({\rm OPT}^{1/3})$, then {\sc 5-Cycle Cover} gives us the best approximation ratio. 

To see that both cases are possible consider the following example. Let the length of the biggest broken cycle be $L$ and divide the $n$ vertices into $n/L$ components of size $L$. For each component, we pick $L/2$ vertex disjoint edges and let them weight 1, the rest of the edges in the component have weight 0, and the edges between various components have weight 2. Then, any broken cycle must be contained in one component, the size of the optimal solution is $\Theta(nL)$, and the length of the largest broken cycle is $\Theta(L)$. For these graphs we see that if $L = o({\sqrt{n}})$, then $L = o({\rm OPT}^{1/3})$. On the other hand if if $L = \omega({\sqrt{n}})$, then $L = \omega({\rm OPT}^{1/3})$.

On the other hand, if we have dense optimal solutions, that is, $\min(L\cdot {\rm OPT}, {\rm OPT}^{4/3}) = \Theta(n^2)$, then we have that all three solutions fix $O(n^2)$ entries. In this case, since {\sc IOMR-Fixed} has a much faster run time than {\sc Short Cycle Cover} and {\sc 5-Cycle Cover}, {\sc IOMR-Fixed} is best option.

\subsection{{\sc Short Path Cover} Approximation Algorithhm}

The main idea for these algorithms is that: \begin{enumerate} \item[(6)] We can approximate the set of all broken cycles by iteratively solving the APSP problem on a decreasing chain of graphs until we obtain a graph with no broken cycles.\end{enumerate} Specifically, once we find all shortest paths in a graph, insight (0) tells that each of these paths defines a broken cycle. Thus, we cover all these broken cycles, remove these edges from the graph, and then repeat. In general we can cover the broken cycles we have found in a variety of different ways, with each one giving us a different approximation algorithm. In this paper we shall present one particular way of doing this in {\sc Short Path Cover (SPC)}.

\begin{algorithm}
\caption{Short Path Cover (SPC) for the Increase Case of Graph Metric Repair}
\begin{algorithmic}[1]
\Function{SPC}{G}
\State $S = \emptyset$
\State $P$ = Shortest Paths between all adjacent vertices
\State Remove all paths that are a single edge
\While{$P$ is not empty}
\While{$P$ is not empty}
\State Let $p$ be a path in $P$. Remove all edges in $p$ from $G$ and add them to $S$
\State Remove all paths from $P$ that intersect $p$
\EndWhile
\State $P$ = Shortest Paths between all adjacent vertices in new graph
\State Remove all paths that are a single edge
\EndWhile
\State \Return \Call{Verifier}{G,S}
\EndFunction
\end{algorithmic}
\end{algorithm}

\begin{thm} \label{thm:spc} 
{\sc SPC} is an $L$-approximation algorithm for the increase only problem that runs in $O(n^3(k+1))$ time. 
\end{thm}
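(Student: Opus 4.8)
The plan is to prove the two assertions—the $L$-approximation guarantee and the $O(n^3(k+1))$ running time—separately, but both rest on a single structural observation about the paths {\sc SPC} extracts. By Theorem~\ref{thm:structure}(2) a set $S$ is the support of an increase-only solution exactly when it contains a bottom edge of every broken cycle of $G$, and by Proposition~\ref{prop:verifier} the final call to {\sc Verifier} turns any such support into an actual solution in cubic time. Hence it suffices to establish three things: (i) correctness, that the returned $S$ meets the bottom-edge criterion; (ii) the ratio, that $|S|\le L\cdot{\rm OPT}$; and (iii) the time bound.

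First I would prove the key lemma that every path $p$ ever selected on line~7, over all iterations of both loops, is a shortest path between two adjacent vertices whose associated cycle (the path together with the connecting edge) is a broken cycle of the \emph{original} $G$ of length at most $L+1$, and moreover that all selected paths are pairwise edge-disjoint. The length bound follows from insight (0): if $p$ is the shortest $u$--$v$ path in the current graph and $uv$ is still an edge, then $uv$ is the top edge and $p$ supplies the bottom edges; since the current graph is a subgraph of $G$ with unchanged weights, this same cycle is broken in $G$, so $|p|+1\le L+1$. Edge-disjointness is immediate: within one inner loop every path meeting the chosen $p$ is deleted from $P$ (line~8), and between successive recomputations the edges of every chosen path have been removed from $G$ (line~7), so later shortest paths cannot reuse them.

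With the lemma in hand the ratio is short. The bottom edges of the broken cycle attached to each selected path are exactly the edges of that path, so any valid support—in particular an optimal one of size ${\rm OPT}=k$—must contain at least one edge of each selected path. Because the selected paths are edge-disjoint these hitting edges are distinct, so the number $m$ of selected paths satisfies $m\le k$; since each contributes at most $L$ distinct edges to $S$ we get $|S|=\sum_i |p_i|\le mL\le kL$, and {\sc Verifier} returns a solution of this support size, giving the $L$-approximation.

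For correctness I would argue that at termination $G-S$ has no broken cycle (the loop exits only when the recomputed $P$ is empty, i.e.\ every edge equals the shortest distance between its endpoints) and then rule out the only dangerous case. The one way $S$ could fail Theorem~\ref{thm:structure}(2) is if some broken cycle $C$ of $G$, with top edge $t$ and bottom set $B$, had $B\cap S=\emptyset$ yet $t\in S$; the content of the argument is that this cannot happen, because a top edge is never placed in $S$. Indeed, if $B\cap S=\emptyset$ then $B$ survives in the current graph at every step, so the endpoints of $t$ always have a strictly cheaper connecting path through $B$; hence $t$ never lies on any shortest path and is never selected. Thus $B\cap S=\emptyset$ forces $t\notin S$, making all of $C$ present and broken in $G-S$, a contradiction. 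This case analysis is the main obstacle, since it is where the asymmetry between top and bottom edges—and therefore the increase-only structure—genuinely enters. Finally, for the running time, the ratio argument already bounds the total number of selected paths by $k$; each outer iteration selects at least one path, so there are at most $k$ outer iterations, each performing one all-pairs shortest-path computation in $O(n^3)$, plus the initial computation and the final $O(n^3)$ {\sc Verifier} call, for a total of $O(n^3(k+1))$.
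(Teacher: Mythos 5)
Your proposal is correct and follows essentially the same route as the paper's proof: charge each extracted shortest path (each of which, with its connecting edge, is a broken cycle of length at most $L+1$ in the original graph) to a distinct bottom edge that any optimal increase-only solution must contain, conclude $|S|\le L\cdot{\rm OPT}$ and at most $k$ outer iterations, and invoke Theorem~\ref{thm:structure}(2) with {\sc Verifier} for correctness. Your explicit statement of pairwise edge-disjointness of the selected paths, and your contrapositive handling of the case where a top edge might enter $S$, are the same ideas the paper uses, just spelled out slightly more carefully.
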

\begin{proof} To see that this is a valid solution, we need to show that $S$ has at least one edge from each broken cycle. Let $C$ be a broken cycle in $G$ with top edge $e$. Now if $e \not\in S$. Then the we know that since $G - S$ has no broken cycles, at least one bottom edge of $C$ must be in $S$. On the other hand if $e \in S$, let $\tilde{S}$ be the support found by {\sc SPC} just before it adds $e$ to $S$. Then when we add $e$ to $S$, it must be the case that $e$ is an edge of the shortest path between two vertices. Hence, at this stage the edge $e$ is the shortest path between its end points. Thus, thus the cycle $C$ is not present in $G - \tilde{S}$. Hence, a bottom edge from $C$ is in $\tilde{S} \subset S$. Thus, $S$ has at least one bottom edge from each broken cycle. Then by theorem \ref{thm:structure}, $S$ is a valid support. 

When we find an uncovered broken cycle we know that at least one bottom edge must be in the optimal solution. However we add all bottom edges. Thus, in the worst case we add $L$ times the number of needed edges.

Next we show that we terminate after $O(k)$ iterations of the outer loop. If we have not terminated, then we must have that $P$ is not empty. Now let $p \in P$ be first path consider. Now $p$ along with the edge between its end points is an uncovered broken cycle. Hence, at least one bottom edge $e$ from this cycle must be in the optimal solution. Then, since we add all bottom edges to the support, we will $e$ to $S$ as well Thus, we see that after at most $k$ iteration of the outer loop we must have covered all broken cycles.  Finally,  each iteration of the outer loop runs in $O(n^3)$ time. Thus, whole algorithm runs in $O(n^3(k+1))$ time.
\end{proof}

\begin{prop} \label{prop:tight} The approximation ratio is tight for {\sc SPC}. 
\end{prop}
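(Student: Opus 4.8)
The plan is to exhibit, for each $L$, a single weighted graph on which {\sc SPC} returns a support of size exactly $L$ while the optimal support has size $1$, so that the ratio $|S_{\text{SPC}}|/\text{OPT} = L$ meets the upper bound proved in Theorem \ref{thm:spc}. The natural candidate is a single cycle $C_{L+1}$ with one heavy edge playing the role of the top edge and $L$ light edges playing the role of the bottom edges; this is essentially the gadget already drawn in Figure \ref{fig:eg}.

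Concretely, I would take vertices $v_1, \ldots, v_{L+1}$, set the top edge $v_1 v_{L+1}$ to weight $L+1$, and set each bottom edge $v_i v_{i+1}$ to weight $1$. First I would verify that this is a broken cycle of length $L+1$: the top weight $L+1$ exceeds the sum $L$ of the remaining edges, and since $C_{L+1}$ contains no other cycle, this is the only (and hence the longest) broken cycle. Next I would show $\text{OPT} = 1$ by increasing a single bottom edge from $1$ to $2$: the bottom edges then sum to $L+1$, the cycle is repaired, and no edge becomes a new top edge, so by Theorem \ref{thm:structure}(2) this single modified edge is a valid increase-only support; no nonempty support can be smaller.

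The crux is to trace {\sc SPC} and show it records all $L$ bottom edges. I would argue that in the initial graph the only shortest path between adjacent vertices that is not a single edge is $p = v_1 v_2 \cdots v_{L+1}$ running through every bottom edge: for each bottom pair $(v_i, v_{i+1})$ the direct edge (weight $1$) beats the detour around the cycle (weight $(L-1) + (L+1) = 2L$), so those paths are single edges and get discarded, while for the top pair $(v_1, v_{L+1})$ the detour $p$ (weight $L$) strictly beats the direct edge (weight $L+1$), making $p$ the unique shortest path. Hence $P = \{p\}$ initially, and when {\sc SPC} selects $p$ it removes and adds all $L$ of its edges in one step; after that removal only the isolated top edge remains, the recomputed $P$ is empty, the outer loop terminates, and {\sc Verifier} is called with $|S| = L$. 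Combining, the ratio is $L/1 = L$.

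The main obstacle I anticipate is not the arithmetic but making this trace airtight, in particular confirming that no bottom edge is accidentally spared, i.e., that $P$ contains no single-edge shortest path covering a bottom edge before $p$ is processed, and that recomputing $P$ on the reduced graph genuinely yields the empty set so that the full bottom path survives in $S$. A secondary point worth stating explicitly is that the identical ratio persists if one takes $\lfloor n/(L+1)\rfloor$ vertex-disjoint copies of this gadget: {\sc SPC} processes each component independently, the longest broken cycle is still $L+1$, and the totals scale to $\text{OPT} = \lfloor n/(L+1)\rfloor$ and $|S_{\text{SPC}}| = L\lfloor n/(L+1)\rfloor$, so tightness holds simultaneously for arbitrarily large $\text{OPT}$ and not merely for the instance with $\text{OPT} = 1$.
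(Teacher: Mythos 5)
Your construction is essentially identical to the paper's: the paper's entire proof is ``Consider $G = C_n$ where one edge has weight $n$ and the rest have weight one,'' which is your gadget with $n = L+1$. Your trace of {\sc SPC}, the verification that ${\rm OPT}=1$, and the disjoint-copies remark are correct fillings-in of details the paper leaves implicit, so the approach matches.
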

\begin{proof} Consider $G = C_n$ where one edge has weight $n$ and then rest of the edges have weight one. 
\end{proof}

\begin{rem} \label{rem:dense} The above proposition shows that the approximation ratio is tight but for extremely sparse graphs. When we have more dense graphs we believe that this algorithm does better as for denser graphs the short paths most likely have more than one edge in the optimal solution. For example on the first iteration let us look at all length two segments  of a shortest path $P$ (see Figure \ref{fig:dense}). Then since $P$ is a shortest path the triangles spanned by any length two segment must be broken. Thus, we actually need to pick at least half of the edges in this path instead of just one. Additionally, for dense graphs we could potentially keep finding small cycles and never encounter a broken cycle of large length.  
\end{rem}

\begin{figure}[h]
\centering
\includegraphics[width = 0.25\textwidth]{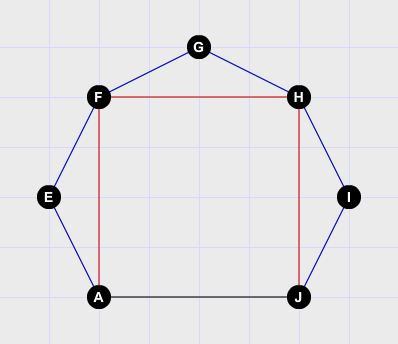}
\caption{Example for remark 3. The black edge is the top edge and the blue edges are the bottom edges.}
\label{fig:dense}
\end{figure}

\noindent\textbf{General Case.} For the general problem we modify {\sc SPC} as follows. Instead of adding all the bottom edges when we consider a cycle, we add all the edges in the cycle. The rest of the analysis is similar to before and can be found in the appendix. 

\begin{thm} \label{thm:AppxGen} 
{\sc General SPC} is an $L+1$-approximation algorithm for the MR$(G, \mathbb{R})$ problem that runs in $O(n^3(k+1))$ time. 
\end{thm}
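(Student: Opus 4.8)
The plan is to mirror the three-part structure of the proof of Theorem \ref{thm:spc} for {\sc SPC}, since {\sc General SPC} differs only in that each time a broken cycle is processed we commit \emph{all} of its edges to $S$ rather than only its bottom edges. I would establish in turn (i) that the returned $S$ is a valid support, (ii) that $|S| \le (L+1)\cdot{\rm OPT}$, and (iii) that the running time is $O(n^3(k+1))$.

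For validity, first I would observe that the outer loop halts exactly when, in the current graph $G-S$, the shortest path between every adjacent pair is the direct edge; by insight (0) this is equivalent to $G-S$ having no broken cycle. Now let $C$ be any broken cycle of the original $G$ and suppose no edge of $C$ lies in $S$. Then every edge of $C$ survives in $G-S$ with unchanged weight, so $C$ is still broken in $G-S$, a contradiction. Hence $S$ meets every broken cycle, and by part 1 of Theorem \ref{thm:structure} it is the support of some solution to MR$(G,\mathbb{R})$; Proposition \ref{prop:verifier} then guarantees {\sc Verifier} returns an actual such solution.

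For the approximation ratio I would use a charging argument. Let $C_1,\dots,C_m$ be the cycles processed by the algorithm in the order they are picked. Each $C_i$ (the chosen shortest path together with its chord) is broken at the moment it is picked, with none of its edges yet in $S$, so by part 1 of Theorem \ref{thm:structure} the optimal support must contain at least one edge of $C_i$. The key structural claim is that the processed cycles are pairwise edge-disjoint: once $C_i$ is handled, all of its edges (bottom edges \emph{and} top edge) are removed from the working graph, so any later $C_j$, whose edges are all still present, can share none of them. Edge-disjointness makes the charged optimal edges distinct across the $C_i$, giving $m \le {\rm OPT}$, while each cycle has length at most $L+1$ and so contributes at most $L+1$ edges to $S$; thus $|S|=\sum_i |C_i| \le (L+1)m \le (L+1)\,{\rm OPT}$.

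The main obstacle is exactly this edge-disjointness claim. In the increase case it suffices that the processed \emph{paths} (the bottom-edge sets) are disjoint, which the ``remove all intersecting paths'' step guarantees within a pass and edge deletion guarantees across passes; and since the charged optimal edges there are bottom edges, path-disjointness is enough. In the general case the optimal edge charged to $C_i$ may be its top edge, so I must ensure the \emph{full} cycles are disjoint. This forces the removal step to discard every path meeting the top edge as well, so that no two processed cycles reuse a top edge or route a bottom path through another cycle's top edge; verifying this bookkeeping is the crux. Finally, the termination and running-time arguments transfer essentially verbatim from Theorem \ref{thm:spc}: each outer iteration costs one APSP computation, $O(n^3)$, and processes the first recomputed path, which is an uncovered broken cycle and hence adds a fresh optimal edge to $S$, so there are at most $k+1$ outer iterations and the total cost is $O(n^3(k+1))$.
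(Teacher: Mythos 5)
Your proposal follows essentially the same route as the paper, which proves Theorem \ref{thm:AppxGen} simply by porting the proof of Theorem \ref{thm:spc} and noting that the increase-only check is no longer needed; your validity and running-time arguments match the paper's, and your validity argument (no edge of a surviving broken cycle could lie in $S$, since $G-S$ ends with no broken cycles) is in fact cleaner than the paper's case analysis on the top edge. The one point you flag but leave open --- that the processed cycles, top edges included, are pairwise edge-disjoint so that the charged optimal edges are distinct --- is a genuine subtlety the paper silently skips, but it closes without altering the removal step. Within a single APSP pass every $p\in P$ is a shortest path in the same graph and is strictly shorter than its chord $e$ (otherwise the cycle is not broken and should not be processed at all); hence no other shortest path computed in that pass can traverse $e$, since rerouting through $p$ would strictly shorten it. So a chord never occurs as an interior edge of another path in the same pass, distinct adjacent pairs give distinct chords, and the stated removal step already makes the paths themselves disjoint; across passes every edge of a processed cycle, chord included, has been deleted from $G$. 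This yields the full-cycle disjointness your charging argument needs, and $|S|\le (L+1)\,{\rm OPT}$ follows as you state. The only residual caveat, which afflicts the algorithm as written rather than your proof, is the tie case $w(p)=w(e)$: there the cycle is not broken and processing it would charge ${\rm OPT}$ for edges it need not repair, so such paths must be discarded together with the single-edge ones.
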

%\begin{proof} It is clear that the set $S$ that {\sc General SPC} finds covers all broken cycles. Thus, {\sc Verifier} finds a valid solution to the MR$(G, \mathbb{R})$ problem with support $S$. 
%\end{proof}

\subsection{5 Cycle Cover and IOMR-fixed}

For {\sc 5 Cycle Cover} it is sufficient to cover all cycles of length $\le 5$ instead of cycles of length $\le 6$ as presented in \cite{Raichel2018}. Then, with no other modification to the rest of the algorithm we obtain an \textbf{$O({\rm OPT}^{1/3})$ that runs in $\Theta(n^5)$ instead of $\Theta(n^6)$}. For a more detailed treatment of this algorithm refer to the appendix. 

Gilbert and Jain presented {\sc IOMR-fixed} with no theoretical analysis. It is an appealingly simple algorithm with good empirical performance. It's also quite efficient and, with our theoretical analysis, is a good algorithm in certain settings. In particular we show that 
\textbf{{\sc IOMR-fixed} is an $\Omega(n)$ approximation algorithm}. The details for the results and some more analysis can be found in the appendix.

%\input{Generalization}
%!TEX root = ./main.tex
\section{Future Work}

We showed that if the underlying $G$ is a chordal graph, then generalized metric repair is fixed parameter tractable. We believe that it should be possible to find an FPT-reduction from Set Cover to generalized metric repair, thus showing that the problem in general is at least W[2] hard. (The authors at the time of writing posit it is actually harder.) This dichotomy would have several interesting implications. 

The first implication is that if the problem is at least W[2] hard, then there most likely exists a family of graphs $\mathcal{F}$ such that the generalized metric repair problem restricted to $\mathcal{F}$ is W[1] hard. In general, if it is harder, this would imply the existence of an increasing chain of families of graphs that all lie on different levels of the W hierarchy. 

A second interesting implication is that perhaps some other parameter besides the size of the optimal solution characterizes differences in hardness amongst various problem instances. For example, if we parameterize by the length of the longest broken cycle, then perhaps there is an interesting split between sparse and dense graphs. In particular, if we have only a linear number of edges, then we can find all broken cycles in fixed parameter time (which is not something we can do in general). 

Another avenue of future work is approximation algorithms. There may be a 2-approximation algorithm for chordal graphs or, in general, better approximations than $O(\log(n))$ for the general problem. We currently do not know if a $O(\log(n))$ approximation algorithm exists. In addition, the current known approximation algorithms are slow and we would like more efficient ones for data analysis applications. One large goal is to apply the metric repair algorithms to such machine learning problems as metric learning, dimension reduction, and data imputation.

One final avenue of future work was introduced in the introduction of this paper. We gave an example of how the TSP could be solved using approximations algorithms for the generalized graph metric repair problem. Hence, it would be interesting to evaluate the performance of this method and the quality of the solutions to the TSP thus obtained.

\nocite{*}
\bibliographystyle{plainnat}
\bibliography{citations} 

\newpage
%!TEX root = ./main.tex
\section{Appendix}

\subsection{Decrease only is in P}

Let us start with the decrease only case ($\Omega = \mathbb{R}_{\le 0}$). As with $G = K_n$, the general decrease only case is easy and can be solved in cubic time by the {\sc Dmr} algorithm. 

\begin{algorithm}
\label{alg:DMR}
\caption{Decrease Metric Repair ({\sc Dmr})}
\begin{algorithmic}[1]
\Function{DMR}{G = (V,E,w)}
\State Let $W = w$
\State For any edge $uv \in E$, set $W(e)$ =  weight of shortest paths between $u$ and $v$ 
\State \Return $w-W$
\EndFunction
\end{algorithmic}
\end{algorithm}

\begin{thm} \label{thm:decrease} MR$(G,\mathbb{R}_{\le 0})$ can be solved in $O(n^3)$ time by the {\sc Dmr} algorithm.\end{thm}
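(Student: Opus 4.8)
The plan is to establish three properties of the output of {\sc Dmr} — that it produces a graph satisfying a metric, that it is a legal decrease-only modification, and that it changes the fewest possible edges — and then to account for the $O(n^3)$ running time. Throughout, I would write $d(u,v)$ for the shortest-path distance between $u$ and $v$ in the original weighted graph $G = (V,E,w)$, so that {\sc Dmr} produces the weighting $W$ with $W(uv) = d(u,v)$ for every edge $uv \in E$. Since the single edge $uv$ is itself a $u$--$v$ path of length $w(uv)$, we always have $d(u,v) \le w(uv)$, so every edge weight is (weakly) decreased and the modification is indeed a decrease-only repair; in particular a decrease-only solution always exists, and {\sc Dmr} never needs to return NONE.

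For validity I would show that $(V,E,W)$ has no broken cycle. Take any cycle $v_1, \dots, v_m, v_1$ and any one of its edges, say $v_1 v_2$. It suffices to check $W(v_1,v_2) \le \sum_{i=2}^{m} W(v_i, v_{i+1})$ (indices cyclic, $v_{m+1} = v_1$). This is exactly $d(v_1, v_2) \le \sum_{i=2}^m d(v_i, v_{i+1})$, which follows by iterating the triangle inequality satisfied by the shortest-path metric $d$ together with its symmetry $d(v_1,v_2) = d(v_2,v_1)$. Since the edge and the cycle were arbitrary, no edge of any cycle exceeds the sum of the others, so there are no broken cycles and $(V,E,W)$ satisfies a metric.

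The main obstacle is optimality, i.e. the matching lower bound that no decrease-only solution can change fewer edges than {\sc Dmr}. The support of the {\sc Dmr} output is exactly the set $\{uv : d(u,v) < w(uv)\}$, so I would argue that every such edge must be modified by an arbitrary valid decrease-only solution $W'$. Fix such an edge $uv$ and let $P$ be a shortest $u$--$v$ path in $G$, so $\sum_{e \in P} w(e) = d(u,v) < w(uv)$; by insight (0) the cycle formed by $uv$ and $P$ is broken with top edge $uv$. Because $W'$ only decreases weights, $w'(e) \le w(e)$ for each $e \in P$, and for this cycle not to be broken under $W'$ we need $w'(uv) \le \sum_{e\in P} w'(e) \le \sum_{e \in P} w(e) = d(u,v) < w(uv)$. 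Hence $w'(uv) < w(uv)$, so $W'$ changes $uv$. Thus every valid decrease-only solution changes at least the edges in the support of {\sc Dmr}, so {\sc Dmr} attains the minimum $\|\cdot\|_0$ and is optimal.

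Finally, for the running time I would note that all the distances $d(u,v)$ (it is enough to know them for endpoints of edges, but computing the full all-pairs shortest paths is harmless) can be obtained by a single Floyd--Warshall computation in $O(n^3)$ time, after which assigning the new weights and forming the returned difference costs $O(n^2)$. The delicate point throughout is the optimality argument: validity and feasibility are immediate from properties of the shortest-path metric, but pinning down that the forced support is precisely $\{uv : d(u,v) < w(uv)\}$ — and, crucially, that a decrease-only solution cannot repair a broken cycle by touching only its bottom edges — is where the decrease-only restriction is essential and is the crux of the proof.
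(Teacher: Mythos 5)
Your proposal is correct and follows essentially the same route as the paper: every edge whose weight exceeds the shortest-path distance between its endpoints is the top edge of a broken cycle and hence must be decreased by any decrease-only solution, while resetting all weights to shortest-path distances yields a metric; you merely make explicit (via the triangle inequality for $d$ and the inequality $w'(e)\le w(e)$ on the bottom path) the steps the paper leaves as ``by the same reasoning.'' No gaps.
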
 

\begin{proof} Let $e \in G$ be an edge whose edge weight is bigger than the shortest path between the two end points of $e$. Then in this case $e$ is the top edge in a broken cycle. Hence, any decrease only solution must decrease this edge. Thus all edges decreased by {\sc Dmr} are edges that must be decreased.

By the same reasoning we see that this new weighted graph has no broken cycles. Thus, we see that our algorithm gives a sparsest solution to MR($G, \mathbb{R}_{\le 0})$ in $\Theta(n^3)$ time. 
\end{proof}

\begin{cor} \label{dor:decp} 
For any $G = (V,E,w)$ {\sc Dmr} returns the smallest solution for any $\ell_p$ norm for $p \in [1, \infty)$. 
\end{cor}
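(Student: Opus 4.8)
The plan is to prove something slightly stronger than what is asked: that the nonnegative modification vector produced by {\sc Dmr} is \emph{entrywise} minimal among all valid decrease solutions. Once this pointwise domination is in hand, optimality in every $\ell_p$ norm follows immediately from monotonicity of the $\ell_p$ norm on nonnegative vectors, so the corollary is not really a new computation but a consequence of a sharper statement hiding inside the proof of Theorem \ref{thm:decrease}.

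First I would record the relevant structural fact about decrease solutions. Write $W^\ast(uv) = d_G(u,v)$ for the new edge weights produced by {\sc Dmr}, i.e.\ the shortest-path distances in the original graph $G$, so that on edge $e = uv$ the modification has magnitude $w(uv) - d_G(u,v) \ge 0$; Theorem \ref{thm:decrease} already guarantees $W^\ast$ is a valid decrease solution. Now let $w'$ be the new weights of an arbitrary valid decrease solution, so $w'(e) \le w(e)$ for every edge and $w'$ has no broken cycle. The key observation is that in a graph with no broken cycle every edge weight equals its own shortest-path distance: the definition of a broken cycle says exactly that for some edge $uv$ and some $u$–$v$ path (the remaining edges of a cycle) the edge weight exceeds the path weight, so the absence of broken cycles forces $w'(uv)$ to be at most the weight of every other $u$–$v$ path, and together with the trivial single-edge path this yields $w'(uv) = d_{G'}(u,v)$, where $d_{G'}$ denotes shortest paths under $w'$.

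Next I would compare the two solutions edge by edge. Since $w' \le w$ on every edge, every $u$–$v$ path has weight under $w'$ at most its weight under $w$, hence $d_{G'}(u,v) \le d_G(u,v)$ for all $u,v$. Combining with the previous paragraph gives $w'(uv) = d_{G'}(u,v) \le d_G(u,v) = W^\ast(uv)$, so $w'(e) \le W^\ast(e) \le w(e)$ for every edge $e$. This says the decrease applied by {\sc Dmr}, namely $w(e) - W^\ast(e)$, is no larger than the decrease $w(e) - w'(e)$ applied by the arbitrary solution; that is, the nonnegative modification vector of {\sc Dmr} is dominated entrywise by that of $w'$.

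Finally I would invoke monotonicity of $\ell_p$ norms: if $0 \le a_e \le b_e$ for all $e$, then $\sum_e a_e^p \le \sum_e b_e^p$ and hence $\|a\|_p \le \|b\|_p$ for every $p \in [1,\infty)$ (indeed for every $p \in (0,\infty]$, which also recovers the $\ell_0$ statement of Theorem \ref{thm:decrease}). Applying this to the modification vectors of {\sc Dmr} and of an arbitrary valid decrease solution shows {\sc Dmr} minimizes $\|\cdot\|_p$ over all such solutions. I do not expect a serious obstacle here; the only points requiring care are the ``edge weight equals shortest path in a metric graph'' characterization and the direction of the path-length inequality under decreasing weights, both of which follow directly from the definition of a broken cycle and from $w' \le w$.
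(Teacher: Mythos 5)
Your proof is correct and follows essentially the same route as the paper: both arguments rest on the fact that {\sc Dmr}'s modification vector is dominated entrywise by that of any valid decrease-only solution, after which $\ell_p$ optimality follows from monotonicity of the norm. Your write-up actually supplies the justification (via $w'(uv) = d_{G'}(u,v) \le d_G(u,v)$) for the ``decreases by the minimum amount'' step that the paper only asserts.
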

\begin{proof} The proof of Theorem \ref{thm:decrease}  actually shows that there is a unique support for the sparsest solution, in fact any decrease only solution must contain these edges in its support. We can also see that {\sc Dmr} decreases these by the minimum amount so that the cycles are not broken. Thus, this solution is in fact the smallest for any $\ell_p$ norm. 
\end{proof}

\subsection{Structural Results}
{\reftheorem{thm:structure} 
For any weighted graph $G = (V,E,w)$ with non negative weights and $S \subset E$, the following hold:
\begin{enumerate}
\item $S$ contains an edge from each broken cycle if and only if $S$ is the support to solution to MR$(G, \mathbb{R})$. 
\item $S$ contains a bottom edge from each broken cycle if and only if $S$ is the support to solution to MR$(G, \mathbb{R}_{\ge 0})$.
\end{enumerate} }
\begin{proof} The only if part of both statements are straightforward to show. For (1) suppose that $S$ does not have an edge from all broken cycles. Let $C$ be such a broken cycle. Then no matter what changes we make to the weights of the edges in $S$, $C$ will remain a broken cycle. Hence, contradicting $S$ being the support of a valid solution. For (2) suppose that $S$ does not have a bottom edge from each broken cycle and let $C$ be such a  broken cycle. Then, no matter how we increase the weights $C$ will remain a broken cycle. Thus, $S$ must have had a bottom edge from all broken cycles. 

For the if part of both statements (1) and (2), we have the same crucial insight, that is, if the shortest path between two adjacent vertices is the not the edge connecting them, then this edge is the top edge of a broken cycle. 

Thus, for both statements let us define the following graphs, $\hat{G} = (V,E-S,w)$. Since in both cases we have that $S$ covered all broken cycle, we see that when we remove the edges in $S$, the new graph has no broken cycles. Thus, the shortest path between all adjacent vertices in $\hat{G}$ is the edge connecting them.

Then for each $e \in S$, set $w(e)$ be the length of the shortest path between its end points in $\hat{G}$ or $\|w\|_\infty$ if no path exists. First let us see that the new weights adhere to a metric. If the edge weight is shortest path between its nodes then it is not a top edge. Hence, edges in $\hat{G}$ and edges whose weight was set to length of the shortest path between its end points in $\hat{G}$.  Thus, we only need to look edges whose weight was set to $\|w\|_\infty$, then it connected two disconnected components in $\hat{G}$. Thus, any cycle with such an edge much involve another edge between components which also has weight $\|w\|_\infty$. Thus, these are not top edges. Thus, there are no top edges. Hence, the metric has been repaired. For the case MR$(G, \mathbb{R})$ we are done. 

For the caseMR$(G, \mathbb{R}_{\ge 0})$, suppose the weight of some edge $e \in S$ was decreased. This implies that $\hat{G}$ along with $e$ has broken cycle for which $e$ is the top edge. Let $P$ be the path between the two edge points of $e$ in $\hat{G}$. Now we assumed that $S$ had a bottom edge from each broken cycle. So it must have an edge from this broken cycle, but then this path could not have existed in $\hat{G}$. Thus, we see that we set $w(e) = \|w\|_\infty$. Thus, it could not have been decreased and we get an increase only solution. 
\end{proof}

{\reftheorem{thm:compact} For any weighted graph $G$ and support $S$ then we have that the set of solution with support $S$ is a closed convex subset of $\mathbb{R}^{n \times n}$. Additionally, if $G - S$ is a connected graph or we require have an upper bound on the weight of each edge, then set of solutions is compact.}
\begin{proof} Let $x_{ij}$ for $1 \le i,j \le n$ be our coordinates. Then the equations $x_{ij} = c_{ij}$ for $ij$ not in the support and $x_{ij} \le x_{ik}+x_{kj}$ define a closed convex set. Thus, we see the first part. For the second we just need to see that set is bounded to get compactness. If we have thatt $G - S$ is connected then for all $e \in S$ there is a path between end points of $e$ in $G-S$. Thus, the weight of this path is an upper bound. On the other hand 0 is always a lower bound. Thus, we get compactness if $G-S$ is connected \end{proof}

{\refprop{prop:verifier}  The {\sc Verifier} algorithm given a weighted graph $G$ and a potential support for a solution $S$ determine in polynomial (cubic in fact) whether there exists a valid (increase only or general) solution on that support and if one exists finds one. }

\begin{proof} First, let us see that if edges not in $S$ are changed then this is not a valid support. Let $v_1v_2$ be the edge that is changed. Then, there exists a path $v_1 = u_1, \hdots, u_k = v_2$ that has shorter total weight than $v_1v_2$ and this forms a broken cycle with $v_1v_2$ as the top edge. 

Since the weight of all edges in $S$ are first changed to the maximum weight in the graph, no bottom edge in the cycle can be in $S$. We also know its not the top edge. Thus, no matter what changes we make to the weights of the edges in $S$ this broken cycle will persist. 

For the increase only case, we have to also make sure that the solution returned doesn't decrease any edge weights in $S$. Let us assume that $S$ is the support of a valid increase only solution and we decreased an edge $e$ in $S$. Then, using similar logic as before, we see that $e$ is the top edge in a broken cycle $C$ and none of the bottom edges belong to $S$. But now the total weight of the bottom edges in the cycle $C$ is smaller than the original edge weight of $e$. Thus, $C$ was initially broken cycle for which we have no bottom edges in $S$. Thus, by Theorem \ref{thm:structure} there couldn't be any increase only solution with this support.
\end{proof}

\subsection{FPT}

{\reflemma{lem:opt} All edges added to $S$ on step 4 of algorithm \ref{alg:fptInc} are in all optimal solutions.}
\begin{proof} As we mentioned this is just insight (2). This is because if an edge $uv$ is in more than $k$ broken triangles. Then besides the edge $uv$ these triangles are all edge disjoint. Thus we would need at least 1 edge from each triangle in the optimal solution. Which contradicts the fact that the optimal solution had size $k$ \end{proof}

{\reflemma{lem:chordal} If $G$ is a chordal weighted graph that has a broken cycle then $G$ has a broken triangle.}
\begin{proof} This is insight (3). See appendix. Let $v_1, \hdots, v_k$ be the smallest (in terms of number of vertices) broken cycle in $G$, with $v_1v_k$ being the top edge. Now there are two possibilities, either $k = 3$ in which we are done otherwise since $G$ is chordal, this cycle has at least 1 chord present in the graph. Let $v_iv_j$ be this chord. Now there are two paths from $v_i$ to $v_j$ on the cycle. Let $P_1, P_2$ be these two paths such that $P_1$ contains the top edge. Now we have two cases. Since $k$ was the length of the smallest broken cycle we must have that \[ w(v_i, v_j) \le  \sum_{e \in P_2} w(e) \] because otherwise this would have been a smaller broken cycle. In this case $P_1$ along with $v_iv_j$ is a broken cycle with $v_1v_k$ as the top edge (since the path using the edge $v_iv_j$ is shorter than the weight of $P_2$). But now this is a smaller broken cycle. Hence, we have a contradiction again. Thus we see that the smallest broken cycle must be a 3 cycle. Thus we have proven the claim.

\end{proof}

For notational convenience given an optimal solution $W$ for a weighted graph $G$ and $A \subset S_W$, let $G_A = (V,E,w_A)$ be the weighted graph in which all $e \in A$, we set $w_A(e) = W(e)+w(e)$ (i.e. the graph in which we have fixed this edges in $A$)

{\reflemma{lem:recurse} If $W$ is an optimal solution and $S \subsetneq S_W$, then there exists an edge in $e \in S_W \cap P$ such that $e \not \in S$.}
\begin{proof} This uses insights (2) and (4). Since we have that $S$ is a proper subset of $S_W$ we know that $S$ cannot be the support of a solution. Then, by Lemma \ref{lem:chordal}, we know that $G_S$ has a broken triangle $T = xyz$, with $xy$ as the top edge. Now if both of the bottom edges $xz,yz$ are  in $S$, then we know these edges have been set to their correct weights (according to $W$) but we still have a broken triangle. Thus, at least one of these edges is not in $S$. There are 3 cases we have to consider:

\textbf{Case 1: No edge of $T$ is in $S$}. In this case $T$ is broken in the original graph $G$ as well and does not have any edge from $S$. Thus, we would have added both bottom edges to $P$ initially on line 6 of algorithm \ref{alg:fptInc} and the claim holds. \\

\textbf{Case 2: The top edge $e = xy$ of $T$ is in $S$}. Let us consider $G_{\{e\}}$. We first note that since we only increase weights $T$ is a broken triangle in $G_{\{e\}}$. Now we know that the size of the optimal solution for $G_{\{e\}}$ is $k - 1$. Hence in $G_{\{e\}}$, $e$ cannot be the top edge in more than $k$ triangles, because we would then need $k$ many new edges. Thus, we add both bottom edges in $T$ to $P$ on lines 7-8 of algorithm \ref{alg:coverInc}. If we had not then $w(x,z)+w(z,y)$ was not among the $k$ smallest sums and, hence, $e$ would have been a top edge in more than $k$ broken triangles. Which is a contradiction and, again, our claim holds. \\

\textbf{Case 3: A bottom edge $e = yz$ in $S$ and the top edge $xy$ in $T$ is not in $S$}. Since we have only increased the bottom edges of $T$, it was broken in the original graph $G$, and we added the other bottom edge to $P$ on line 8-9 of algorithm \ref{alg:fptInc}. 

To see why this is the case, let us suppose we had not and let $e_t, e_b$ be the top and the other bottom edge of $T$. Then, $|w(e_t) - w(e_b)|$ is less than $k$ larger values. In which case, $e$ is in more than $k$ broken triangles in $G_{\{e\}}$. But this is a contradiction as these triangles are edge disjoint besides the edge $e$. Hence, we would have included $e_b$ in $P$ on line 8-9 of algorithm \ref{alg:fptInc}, and, again, the claim holds \\

Thus, we see that in all cases we have at least one edge in $P \cap S_W$ that is not in $S$ and the claim holds.  \end{proof}

{\reflemma{lem:bound} At all stages of the algorithm we have that $|P| \le 5k^2$.}
\begin{proof} Let us first look at the number of edges added to $P$ by algorithm \ref{alg:fptInc}. We add edges to $P$ twice in algorithm \ref{alg:fptInc}. The first time we add edges, we look at all broken triangles that do not have any edges in $S$. I claim that there are at most $k^2$ such broken triangles. This is because we know all broken triangles can be covered by $k$ edges. But since none of these triangles have either bottom edge in $S$, we see that each edge in the solution is in at most $k$ of these triangles. Thus we have at most $k^2$ broken triangles. Then we add both bottom edges to $P$. Thus we have added $2k^2$ edges to $P$. The second time we add edges to $P$ in algorithm \ref{alg:fptInc} we look at each edge $e$ in $S$ and add at most $k$ edges for each of these edges. Thus since $|S| \le k$ we have that again we add at most $k^2$ edges to $P$

Let us look at the edges we add to $P$ in algorithm \ref{alg:coverInc}. We see that for any edge in $S$ we add at most $2k$ edges to $P$ once. Since $|S| \le k$ at all times we see that we have added at most $2k^2$ edges to $P$. Thus in total we see that at all times we have that $|P| \le 5k^2$. \end{proof}

\subsection{General FPT}

\begin{algorithm}[h]
\caption{FPT algorithm General only Case}
\label{alg:fpt}
\begin{algorithmic}[1]
\Function{fpt}{G=(w,V),k}
	\State  $S = \emptyset$
	\State $T = $ Set of all broken 3-cycles in $G$
	\State For all edges $e$ in more than $k$ broken triangles add $e$ to $S$
	\State $P = \emptyset$
	\State For each triangle in $T$. If it has no edges from $S$ add all edges to $P$
	\For{Each $e = (ij) \in S$}
		\State Sort $\{|w_{il}-w_{jl}| l = 1,\hdots, n\}$
		\State For the $k$ biggest entries add edges $il, lj$ to $P$
		\State Sort $\{|w_{il}+w_{jl}| l = 1,\hdots, n\}$
		\State For the $k$ smallest entries add edges $il, lj$ to $P$
	\EndFor
	\State S = \Call{Cover}{$(G,S,P,k)$}
	\State \Return{$S$}
\EndFunction
\end{algorithmic}
\end{algorithm}

\begin{algorithm}[h]
\caption{Cover for FPT algorithm General Only Case}
\label{alg:cover}
\begin{algorithmic}[1]
\Function{Cover}{G=(w,V),S,P,k}
	\If{k=|S|}
		\State \Return \Call{verifier}{G,S}
	\ElsIf{|S| > k}
		\State \Return NULL
	\EndIf
	\For{$e = (ij) \in S$}
		\State Sort $\{w_{il}+w_{lj} | l = 1, \hdots, n\}$
		\State For the smallest $k$ entries add edge $il, jl$ to $P$
		\State Sort $\{w_{il}-w_{lj} | l = 1, \hdots, n\}$
		\State For the biggest $k$ entries add edge $il, jl$ to $P$
		\State For $(i,l) \in S$, add $(jl)$ to $S$
		\State For $(j,l) \in S$, add $(il)$ to $S$ 
	\EndFor
	\For{$e \in P$}
		\State S = \Call{Cover}{($G,S \cup \{e\}, P-\{e\}, k$)}
		\If{$S \neq$ NULL}
			\State \Return $S$
		\EndIf
	\EndFor
	\Return S
\EndFunction
\end{algorithmic}
\end{algorithm}

Now the relevant lemmas are as follows.

\begin{lem} \label{lem:optg} All edges added to $S$ on step 4 are in all optimal solutions \end{lem}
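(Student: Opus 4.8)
The plan is to mirror the argument used for Lemma \ref{lem:opt}, but to replace the appeal to part (2) of Theorem \ref{thm:structure} with part (1), since we are now in the general case where a valid support need only hit each broken cycle with \emph{some} edge (top or bottom) rather than specifically a bottom edge. Concretely, I would fix an arbitrary optimal solution $W$ with support $S_W$, where $|S_W| = k = \mathrm{OPT}$, and an edge $e$ that lies in more than $k$ broken triangles, and then argue by contradiction that $e \in S_W$.

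First I would record the elementary but essential combinatorial fact that any two broken triangles containing $e = uv$ are edge-disjoint apart from $e$ itself: a triangle through $uv$ is determined by its apex $w$, and distinct apices $w_1 \neq w_2$ yield four distinct side edges $uw_1, vw_1, uw_2, vw_2$. Hence the collection of the more-than-$k$ broken triangles through $e$ forms a ``book'' whose only shared edge is $e$. Next, supposing toward a contradiction that $e \notin S_W$, I would invoke part (1) of Theorem \ref{thm:structure}: $S_W$ must contain at least one edge from every broken cycle, in particular from each of these broken triangles. Since $e \notin S_W$, for each such triangle the covering edge must be one of its two non-$e$ sides, and by the edge-disjointness just observed these covering edges are pairwise distinct across the triangles. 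Thus $S_W$ would contain strictly more than $k$ distinct edges, contradicting $|S_W| = k$. Therefore $e \in S_W$, and since $W$ was an arbitrary optimal solution, $e$ lies in every optimal solution, which is precisely the property asserted for the edges added to $S$ on line 4 of Algorithm \ref{alg:fpt}.

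The argument is essentially routine once the edge-disjointness observation is in place; the only real subtlety to watch is that in the general case line 4 counts a triangle whenever $e$ appears in it at all (as top \emph{or} bottom edge), so I must use the general structural characterization (part (1) of Theorem \ref{thm:structure}) rather than the increase-only one (part (2)) that underlies Lemma \ref{lem:opt}. Beyond the cardinality bound $|S_W| = k$, no further structural facts about $W$ (compactness, realizability of the weights, etc.) are needed.
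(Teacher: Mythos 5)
Your proof is correct and follows essentially the same route as the paper's: the paper likewise observes that the more-than-$k$ broken triangles through $e$ are pairwise edge-disjoint apart from $e$ itself, so that covering all of them without using $e$ would force more than $k$ edges into the support, contradicting $|S_W| = k$. Your write-up merely makes explicit the appeal to part (1) of Theorem \ref{thm:structure} and the distinctness of the covering edges, which the paper leaves implicit.
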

\begin{proof} As we mentioned this is just insight (2). This is because if an edge $uv$ is in more than $k$ broken triangles. Then besides the edge $uv$ these triangles are all edge disjoint. Thus we would need at least 1 edge from each triangle in the optimal solution. Which contradicts the fact that the optimal solution had size $k$ \end{proof}

\begin{lem} \label{lem:recurseg} If $W$ is an optimal solution and $S \subsetneq S_W$, then there exists an edge in $e \in S_W \cap P$ such that $e \not \in S$ \end{lem}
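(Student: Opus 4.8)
The plan is to follow the skeleton of the increase-only argument (Lemma \ref{lem:recurse}), replacing insights (2) and (4) by their general-case analogues (2') and (4'), and paying for the loss of monotonicity with a counting argument carried out directly in the partially fixed graph $G_S$. First I would observe that $S$ cannot be a valid support: if it were, then by Theorem \ref{thm:structure}(1) together with the {\sc Verifier} (Proposition \ref{prop:verifier}) there would be a solution supported on a subset of $S$, of size at most $|S| < |S_W| = k$, contradicting the optimality of $W$. Hence $G$ has a broken cycle disjoint from $S$, which remains broken in $G_S$; since $G_S$ is chordal, Lemma \ref{lem:chordal} produces a broken triangle $T = xyz$ in $G_S$ with top edge $xy$. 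The engine of the whole proof is the remark that $T$ is \emph{not} broken in $G_{S_W}$ (as $W$ is a valid solution), while $G_S$ and $G_{S_W}$ agree off $S_W \setminus S$; therefore $T$ has an edge $e^\star \in S_W \setminus S$. The lemma will follow once I show that $e^\star$ (or another edge of $T$ lying in $S_W \setminus S$) was placed in $P$.

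Next I would split into cases according to which edges of $T$ lie in $S$. If $T \cap S = \emptyset$, then $T$ is broken in $G$ and line 6 of Algorithm \ref{alg:fpt} adds \emph{all three} of its edges to $P$, so $e^\star \in P$. If the top edge $xy \in S$, then $e^\star$ must be one of the two bottom edges, and I would use the step of Algorithm \ref{alg:cover} (equivalently Algorithm \ref{alg:fpt}) that treats the fixed edge $xy$ as a \emph{top} edge and inserts the bottom edges of the $k$ triangles $xyl$ with the smallest $w_{xl}+w_{yl}$. Finally, if $xy \notin S$ but a bottom edge, say $xz \in S$, I would use the step that treats $xz$ as a \emph{bottom} edge and, for the $k$ triangles $xlz$ with the largest $w_{xl}-w_{lz}$, inserts both the partner top edge $xl$ and the other bottom edge $lz$; taking $l = y$ recovers $xy$ and $zy$, one of which is $e^\star$.

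To close the last two cases I would establish the general-case counting bound, which is insight (4') lifted from $G_{\{e\}}$ to $G_S$: a fixed edge $e \in S$ is the top edge (respectively, a bottom edge) of at most $k$ broken triangles of $G_S$. The argument is that any such triangle is broken in $G_S$ but not in $G_{S_W}$, hence contains an edge of $S_W \setminus S$ distinct from $e$; the triangles through $e$ are edge-disjoint apart from $e$, so these witnessing edges are pairwise distinct and number at most $|S_W \setminus S| \le k$. Because ``$xy$ is the top edge of $xyl$'' is exactly ``$w_{xl}+w_{yl} < w_{xy}$'' and ``$xz$ is a bottom edge of $xyz$ with $xy$ on top'' is exactly ``$w_{xy}-w_{yz} > w_{xz}$'', the at most $k$ broken triangles are precisely those selected by the smallest-sum and largest-difference sorts, so $T$ is among them and the needed edges, including $e^\star$, reach $P$.

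I expect the main obstacle to be exactly the loss of monotonicity: in the increase-only proof one can assert ``$T$ was already broken in the original $G$'' whenever only bottom edges were fixed, but here fixing an edge in $S$ can either raise or lower its weight, so every brokenness and counting statement must be made in $G_S$ and a fixed edge must be allowed to play the role of either a top or a bottom edge of a newly broken triangle. Making this precise---in particular checking that the sorts in Algorithms \ref{alg:fpt} and \ref{alg:cover} are read off the current ($G_S$) weights so that ``broken in $G_S$'' lines up with the smallest-sum and largest-difference rankings, and confirming that the witnessing edges counted above are genuinely pairwise distinct---is where the real care is needed; this is also the structural reason Algorithm \ref{alg:cover} must add candidates for both the top-edge and the bottom-edge roles of each fixed edge.
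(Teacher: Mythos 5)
Your route is the paper's: extract a broken triangle $T = xyz$ of $G_S$ via chordality (Lemma \ref{lem:chordal}), case on which edges of $T$ lie in $S$, and use the counting insight (2')/(4') to argue the sorts in Algorithms \ref{alg:fpt} and \ref{alg:cover} put the needed edge of $S_W \setminus S$ into $P$. Two of your choices are actually cleaner than the paper's: you make explicit that $T$ must contain an edge of $S_W \setminus S$ (because $G_S$ and $G_{S_W}$ agree off $S_W \setminus S$ and $T$ is unbroken in the latter), and you prove the at-most-$k$ bound by pairwise-distinct witnessing edges directly in $G_S$ rather than via the paper's detour through $G_{\{e\}}$, which is the right way to absorb the loss of monotonicity.

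The gap is the subcase in which \emph{two} edges of $T$ lie in $S$. The paper does not push this case through the sorts: its Case 2 invokes a dedicated step of Algorithm \ref{alg:cover} that, whenever two edges of $S$ share a vertex, unconditionally adds the third edge of the triangle they span to $P$ (this step also accounts for one of the extra $k^2$ terms in Lemma \ref{lem:bounbg}). You instead absorb this configuration into your ``top edge in $S$'' and ``bottom edge in $S$'' cases and rely on the sort for one fixed edge to discover $T$. But the sort for $xy$ ranks the stored sums $w_{xl}+w_{yl}$, and when the other edge $xz$ of $T$ is also in $S$ its stored weight is not its $G_S$-weight; ``$T$ is broken in $G_S$'' then no longer implies that $w_{xz}+w_{yz}$ lands among the $k$ smallest sums, and $T$ can be crowded out entirely (symmetrically for the largest-difference sort when both bottom edges are fixed). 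You flag this alignment between ``broken in $G_S$'' and the sort ranking as a detail needing care, but in this subcase it is not a detail --- it genuinely fails, and the unconditional third-edge step is the missing idea that the paper uses to repair it.
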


\begin{proof} Again Lemma \ref{lem:chordal} tells us that $G_S$ has a broken triangle $T = xyz$ with $xy$ being the top edge. Again we see that since the triangle is broken one of its 3 edges cannot be in $S$. Thus, we have 4 cases.

\textbf{Case 1: No edges $T$ in $S$}. Then we added all 3 edges to $P$ on line 6 of algorithm \ref{alg:fpt} and hence, we the claim holds. \\

\textbf{Case 2: Two of the three edges from $T$ in $S$}. Then lines 11,12 of algorithm \ref{alg:cover} add the the third edge to $P$ and hence, the claim holds. \\

\textbf{Case 3: One edge from $T$ in $S$ and the edge was increased}. We added the requisite edges to $P$ on line 9 of algorithm \ref{alg:fpt} and line 8 of algorithm \ref{alg:cover}. The reasoning is the same as in case 2 and case 3 of Lemma \ref{lem:recurse} and hence, the claim holds. \\

\textbf{Case 3: One edge from $T$ in $S$ and the edge was decreased}.  We added the requisite edges to $P$ on line 11 of algorithm \ref{alg:fpt} and line 10 of algorithm \ref{alg:cover}. The reasoning is analogous to the ones in case 2 and case 3 of Lemma \ref{lem:recurse} and hence, the claim holds. \\

Thus, we see that in all cases we have at least one edge in $P \cap S_W$ that is not in $S$ and the claim holds. \end{proof}

\begin{lem} \label{lem:bounbg} At all stages of the algorithm we have that $|P| \le 12k^2$ \end{lem}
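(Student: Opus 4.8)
The plan is to mirror the proof of Lemma \ref{lem:bound} for the increase-only case, but to re-examine each point in Algorithms \ref{alg:fpt} and \ref{alg:cover} at which edges are inserted into $P$ and to account for the extra edges that the general case forces us to add. There are exactly three such places: line 6 of {\sc fpt}, the sorting loop on lines 7--11 of {\sc fpt}, and the sorting loop in {\sc Cover}. I would bound the contribution of each separately and then sum, keeping in mind that $|S| \le k$ holds at every stage: after line 4 this follows from Lemma \ref{lem:optg} (more than $k$ forced edges would contradict $|S_W| = k$), and inside {\sc Cover} the branch $|S| > k$ returns {\sc null} immediately.

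First I would handle line 6 of Algorithm \ref{alg:fpt}, which adds all three edges of every broken triangle that has no edge in $S$. The key step is to bound the number of such triangles by $k^2$. Unlike the increase-only case, where I invoked Theorem \ref{thm:structure}(2) and counted bottom edges, here I would use Theorem \ref{thm:structure}(1): every broken triangle contains an edge of the optimal support $S_W$, and $|S_W| = k$. Since each counted triangle has no edge in $S$, the $S_W$-edge covering it lies outside $S$, and by insight (2$'$) together with line 4 every edge outside $S$ belongs to at most $k$ broken triangles. Hence the $k$ edges of $S_W$ cover at most $k \cdot k = k^2$ of these triangles, and since each contributes three edges we add at most $3k^2$ edges to $P$ here. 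This counting of triangles is the step I expect to be the main obstacle, since it is where the top/bottom distinction of the increase-only analysis collapses and must be replaced by the ``any edge'' statements (2$'$) and Theorem \ref{thm:structure}(1).

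Next I would handle the two sorting loops. In lines 7--11 of Algorithm \ref{alg:fpt} each edge $e = (ij) \in S$ triggers two sorts (on $|w_{il}-w_{jl}|$ and on $|w_{il}+w_{jl}|$), and each sort contributes the two edges $il, lj$ for its $k$ selected entries, i.e. $2k + 2k = 4k$ edges per $e$; since $|S| \le k$, this loop adds at most $4k^2$ edges. The sorting loop in Algorithm \ref{alg:cover} is identical in form --- two sorts, each contributing $2k$ edges per edge of $S$, each edge processed once via the new-edge indicator --- and so it too adds at most $4k^2$ edges over the whole recursion.

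Finally I would sum the three contributions, $3k^2 + 4k^2 + 4k^2 = 11k^2 \le 12k^2$, which holds at every stage because $|S| \le k$ is maintained throughout; this gives the claimed bound. I should note that the edges inserted directly into $S$ on lines 12--13 of {\sc Cover} do not affect this count, since they enlarge $S$ (whose size is capped at $k$) rather than $P$, and their subsequent processing is already subsumed in the $4k^2$ budget allotted to the {\sc Cover} sorts.
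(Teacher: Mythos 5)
Your accounting follows the paper's proof almost term for term: the same decomposition into the sources of insertions into $P$, the same $k^2$ bound on the number of broken triangles with no edge in $S$ (via the fact that every edge surviving line 4 lies in at most $k$ broken triangles, so the at most $k$ edges of $S_W$ cover at most $k^2$ such triangles), giving $3k^2$, and the same $4k^2$ apiece for the two sorting loops in {\sc fpt} and {\sc Cover}. The one place you diverge is the final step of the loop in {\sc Cover}, where for each pair of adjacent edges of $S$ the third edge of the triangle they span is inserted. You read the listing literally (it says ``add to $S$'') and conclude that this step cannot affect $|P|$; the paper's proof instead counts it as at most $k^2$ further insertions into $P$, one per pair of adjacent edges of $S$, which is exactly what brings its total to $3k^2+4k^2+4k^2+k^2 = 12k^2$ rather than your $11k^2$. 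The listing appears to contain a typo: Case 2 in the proof of Lemma \ref{lem:recurseg} (two edges of the broken triangle already in $S$) explicitly relies on that third edge landing in $P$, since the lemma asserts the existence of an edge of $S_W \cap P$ not in $S$ --- an edge placed directly into $S$ would not witness that claim. So under the intended semantics your dismissal of that step is not safe, and you should add its $k^2$ contribution to your tally; fortunately $11k^2 + k^2 = 12k^2$, so the stated bound still holds and the rest of your argument matches the paper's.
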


\begin{proof} Let us first look at the number of edges added to $P$ by algorithm  \ref{alg:fpt}.. We add edges to $P$ thrice in algorithm 5. The first time we look at all triangles that do not have any edges in $S$. I claim that there are at most such broken $k^2$ triangles. This is because we know all broken triangles can be covered by $k$ edges. But since none of these triangles have any edges in $S$, we see that each edge in the solution is in at most $k$ of these triangles. Thus we have at most $k^2$ of these triangles. 

Then we add all edges to $P$. Thus, we have added at most $3k^2$ edges to $P$. The second and third time we add edges to $P$ in algorithm we look at each edge $e$ in $S$ and each time add at most $2k$ edges for each of these edges. Thus, since $|S| \le k$ we have that again we add at most $4k^2$ edges to $P$

Now let us look at the edges we add to $P$ in algorithm  \ref{alg:cover}.. Now we see that for any edge in $S$ we add at most $4k$ edges to $P$ once. Now since $|S| \le k$ at all times we see that we have added at most $4k^2$ edges to $P$. 

Additionally if two edges in $S$ are adjacent, we add in the third edge (that would make it a triangle) to $P$. There are most $k^2$ such edges (since $|S| \le k$)

Thus in total we see that at all times we have that $|P| \le 12k^2$. \end{proof}

Hence now we can put these lemmas together in the same manner as before to get the following major theorem 

{\reftheorem{thm:FPT2} If we restrict $G$ to being a chordal graph. Then MR$(G, \mathbb{R})$ is fixed parameter tractable when parameterized by the size of the optimal solution and can be solved in $O((12k^2)^kn^3)$ time. }

\subsection{Approximation Algorithm}

\begin{algorithm}
\caption{Short Path Cover for General Graph Metric Repair}
\begin{algorithmic}[1]
\Function{General Short Path Cover}{G}
\State $S = \emptyset$
\State $P$ = Shortest Paths between all adjacent vertices
\State Remove all paths that are a single edge
\While{$P$ is not empty}
\While{$P$ is not empty}
\State Let $p$ be a path in $P$. Remove all edges in $p$ from $G$ and add them to $S$. 
\State Add the edge between the end points of $p$ to $S$ and remove it from $G$
\State Remove all paths from $P$ that intersect $p$. 
\EndWhile
\State $P$ = Shortest Paths between all adjacent vertices in new graph
\State Remove all paths that are a single edge
\EndWhile
\State \Return \Call{Verifier}{G,S}
\EndFunction
\end{algorithmic}
\end{algorithm}

Using the same proof Theorem \ref{thm:spc} we can show that {\sc General Short Path Cover} terminates after $O(k)$ iterations of the outer loop. Furthermore in this case do not need to show the distances are only increased. 

\subsection{5 Cycle Cover}
The algorithm presented in \cite{Raichel2018} has 3 major steps. The first two steps are used to estimate the support of the optimal solution and then the last step is actually used to find a solution given this support. We shall focus on the first 2 steps as these are where we are making some modifications. 

\begin{figure}[h]
\centering
\includegraphics[width = 0.3\textwidth, height = 0.4\textwidth]{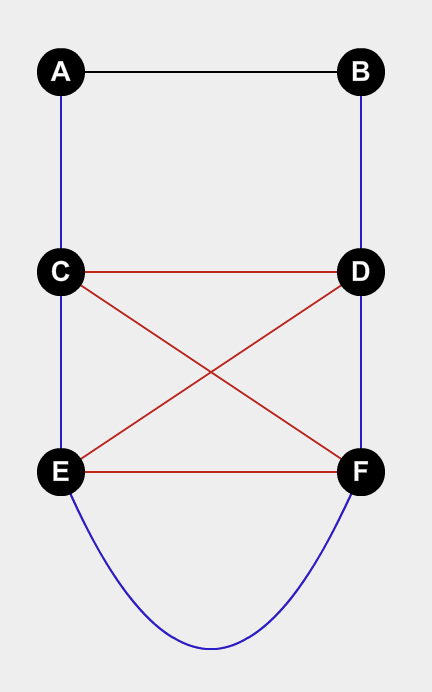} \hspace{3cm}
\includegraphics[width = 0.3\textwidth]{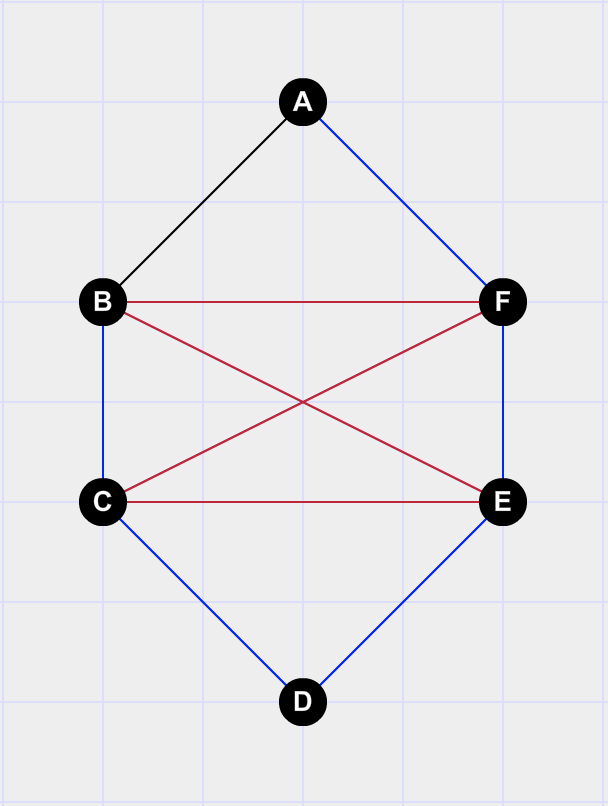}
\caption{Left: Embedding from \cite{Raichel2018}. Right: Our modified embedding for a smaller cycle. Here the black edge is the top edge. The blue edges are the bottom edges and the red edges are the embedded 4 cycle. The curved blue edge indicates that there are more vertices along that path}
\label{fig:4cycle}
\end{figure}

\textbf{First Step:} In the first step they estimate a support for all broken cycles of length $ \le m$. In particular they use the case when $m = 6$. As described in \cite{Raichel2018} we can $m-1$ approximation of the optimal cover for all broken cycles of length $ \le m$ in $O(n^m)$ time. Let us denote this cover by $S_{\le m}$. 

\textbf{Second Step:} For this step we need to first define unit cycles. Given a broken cycle $C$ with top edge $t$, let $e$ be a chord of $C$. Then $e$ divides $C$ into 2 cycles, one that contains $t$, denoted top($C,e$) and one that does not contain $t$ denoted bot($C,e$). Now we say this cycle is a unit cycle if for all chords $e$, $e$ is not the top edge of bot($C,e$). 

Then step 2 of the algorithm from \cite{Raichel2018} covered all unit cycles not covered by $S_{\le 6}$. Let $C$ be such a unit cycle. Now we know that $C$ has at least $7$ edges. Consider the red $C_4$ shown in Figure \ref{fig:4cycle}. Now we know that for each $e \in C_4$ we have that top$(C,e)$ is a broken cycle with less than or equal 6 edges. Hence, must have at least 1 edge in $S_{\le 6}$. But now since $C$ has no bottom edges in $S_{\le 6}$, we must have $e \in S_{\le 6}$. Thus, we know all edges in $C_4$ are edges in $S_{\le 6}$. They called this step $chord4(S_{\le 6})$.

For our modification we Figure  \ref{fig:4cycle} on the right shows how to embed the same 4 cycle in a 6 cycle instead of a 7 cycle. Thus, our modified algorithm is:

\begin{algorithm}[h]
\caption{5-Cycle Cover}
\begin{algorithmic}[1]
\Function{5 Cycle Cover}{G=(w,V)}
	\State Compute a regular cover of $S_{\le 5}$ of all broken cycles with $\le 5$ edges
	\State Compute a cover $S_c = chord4(S_{\le 5})$
	\State \Return \Call{Verifier}{$G,S_c \cup S_{\le 5}$}
	\EndFunction
\end{algorithmic}
\end{algorithm}

\subsection{IOMR-fixed}

The algorithm presented in \citet{Gilbert2017} is a s follows

\begin{algorithm}
\caption{IOMR Fixed}
\begin{algorithmic}[1]
\Require{$D \in \Sym$ }
\Function{IOMR-Fixed}{D}
\State $\hat{D} = D$
\For{$k \gets 1 \textrm{ to } n$}
\For{$i \gets 1 \textrm{ to } n$}
\State $\hat{D}_{ik} = \max(\hat{D}_{ik}, \max_{j < i}(\hat{D}_{ij}-\hat{D}_{jk}))$
\EndFor
\EndFor
\State \Return{$\hat{D}-D$}
\EndFunction
\end{algorithmic}
\end{algorithm}

We will now prove a few results about its worst case approximation ratio and its approximation ratio if {\rm OPT} = $\Theta(n^2)$.

\begin{lem} \label{lem:IOMRMax} {\sc IOMR-Fixed} can update or repair a maximum of $(n-1)(n-2)$ entries. \end{lem}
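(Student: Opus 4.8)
The plan is to split the claim into an upper bound --- that {\sc IOMR-Fixed} alters at most $(n-1)(n-2)$ entries of $\hat D$ --- together with an instance attaining it. The only line that ever modifies $\hat D$ is the assignment $\hat D_{ik} \leftarrow \max(\hat D_{ik}, \max_{j<i}(\hat D_{ij}-\hat D_{jk}))$, and it is executed exactly once for each ordered pair $(i,k)$ (outer index $k$, inner index $i$). So the whole question reduces to counting the ordered pairs $(i,k)$ at which this assignment can ever produce a strict increase, since a nonzero entry of the returned matrix $\hat D - D$ can occur only at such a pair.

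First I would dispose of the easy immune set: when $i=1$ the inner maximum ranges over the empty index set $\{j : j<1\}$, so $\hat D_{1k}$ is never touched for any $k$, which frees the whole first row ($n$ entries). I would then argue that the diagonal is never raised --- the self-distances begin at $0$ and, reading $\hat D$ as a genuine distance matrix, no triangle on distinct vertices forces a self-distance upward --- and that the first column is immune as well, using that the index $1$ acts as a frozen anchor (we have $\hat D_{11}=0$ throughout and all of row $1$ is fixed). Subtracting the first row, the first column, and the diagonal from the $n^2$ positions leaves exactly $n^2 - n - (n-1) - (n-1) = (n-1)(n-2)$ candidate positions, which is the claimed count.

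The step I expect to be the main obstacle is making the diagonal and first-column claims airtight, because the assignment runs in a fixed $(k,i)$ order and earlier increases feed into later evaluations; in particular the update as written need not preserve the symmetry of $\hat D$, so one has to rule out a cascade in which an asymmetry created in some off-diagonal slot later leaks a strict increase into a diagonal or first-column slot. I would control this by an induction over the processing order, maintaining an invariant that pins down $\hat D_{i1}$ and $\hat D_{ii}$ relative to the frozen first row. For tightness, the easier half, I would exhibit a specific symmetric input --- for example a path- or cycle-type weighting with a single dominant long distance, in the spirit of the tight example used for {\sc SPC} in Proposition \ref{prop:tight} --- on which every one of the $(n-1)(n-2)$ unprotected positions is strictly raised, so that $\|\hat D - D\|_0 = (n-1)(n-2)$ exactly and the maximum is realized.
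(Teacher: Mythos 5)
Your counting framework is sound --- the bound follows once you exhibit $3n-2$ positions that are never strictly increased, since $n^2-(3n-2)=(n-1)(n-2)$ --- and the first-row and diagonal claims are fine. The genuine gap is the first-column claim: $\hat{D}_{i1}$ is \emph{not} immune. At the step $(i,k)=(i,1)$ the algorithm computes $\max_{j<i}(\hat{D}_{ij}-\hat{D}_{j1})$, an honest maximum over $j=1,\dots,i-1$, and the terms with $j>1$ can exceed $\hat{D}_{i1}$. Concretely, for $n=3$ with $D_{21}=D_{31}=1$ and $D_{32}=10$, the pass $k=1$, $i=3$ evaluates $\hat{D}_{32}-\hat{D}_{21}=9>1$ and raises $\hat{D}_{31}$ to $9$. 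So the immune off-diagonal positions are not the first column (equivalently, under the symmetric identification $\hat{D}_{1s}=\hat{D}_{s1}$, not the first row either), and no invariant ``pinning $\hat{D}_{i1}$ to the frozen first row'' can be maintained: the asymmetry worry you flagged is exactly where the argument breaks, and it cannot be repaired within your fixed decomposition.

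The paper's proof replaces the fixed set $\{(s,1)\}$ with an instance-dependent one: for every $s>1$ it produces \emph{some} $t<s$ such that $D_{st}$ is never updated. Either $D_{s1}$ survives the $k=1$ pass (take $t=1$), or it is raised to $\hat{D}_{s1}=D_{sr}-D_{1r}$ for the maximizing $r<s$, which makes the triangle inequality $D_{sr}\le \hat{D}_{s1}+D_{1r}$ tight with both right-hand entries already frozen (entries indexed by $1$ can only change on the $k=1$ pass); since the algorithm is increase-only and its output must satisfy the triangle inequality, $D_{sr}$ can never be raised afterwards, so take $t=r$. That yields $n-1$ immune below-diagonal entries, hence $2(n-1)$ by symmetry, plus the $n$ diagonal entries, recovering the same count $3n-2$. (Separately, the tightness instance you propose is not needed for this lemma: the statement is only an upper bound, and the matching example is the content of Lemma \ref{lem:OmegaIOMR}.)
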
 
\begin{proof} We will show that for every $s > 1$, there exists a $t < s$ such that $D_{st}$ is never updated which implies that the number of entries in the matrix not touched is at least 
\[ 
	n + 2(n-1).
\] 
This figure includes $n$ entries on the diagonal and 2$(n-1)$ from the entries seen on previous iterations (the factor of two comes from maintaining symmetry). Thus the maximum number of entries that can be updated is 
\[ 
	n^2 - n - 2(n-1) = n(n-1) - 2(n-1) = (n-1)(n-2).
\]

Let $s > 1$ be fixed and consider the entry $D_{s1}$. We see that {\sc IOMR-fixed} can only look at each entry $D_{ab}$ twice, once when $a=i, b=k$ and once when $a=k, b=i$. From now on we shall refer to these as the first and second time that we look at the entries. We can also see the first time occurs when $k = \min(a,b)$ and the second time when $k = \max(a,b)$. Thus, we see that entries of the form $D_{il}$ can be changed only the first time that they are seen. This is because the second time $D_{il}$ is seen, we have $i=1$. Hence, there are no indices $j < i$ at which we update. Thus, $D_{s1}$ can only be updated the first time it is seen (when $k=1$). Thus, there are two possibilities. 

\textbf{Case 1:} $D_{s1}$ is not updated the first time it is seen. Then, letting $t=1$ we get the result that we wanted. 

\textbf{Case 2:} $D_{st}$ is updated the first time it is looked at. Then we know that there must exist an $r < s$ such that \textbf{before} the update we have that \[ D_{sr} > D_{s1} + D_{1r}. \]

Now since $r < s$ we have already seen $D_{r1}$ for the first time. Thus at this point $D_{r1}$ is fixed and cannot be changed again. Now let us make the update after which we have \[ D_{rs} = \hat{D}_{s1} + D_{1r} \] where $\hat{D}_{r1} = D_{rs}-D_{ir}$. Now both terms on the right hand side have 1 as an index and have been seen once. Thus, they are now fixed and can never be updated again. 

Let us consider $D_{rs}$. Here we have that $r,s > 1$, this entry has not been seen yet, and it has not been updated. Now if it is updated in the future, because we are in the increase only case, updating this entry will break the above triangle. Since we know that {\sc IOMR-fixed} is a correct algorithm, $D_{rs}$ cannot be updated. Thus, letting $t=r$ we get the desired result. 
\end{proof}

\begin{lem} \label{lem:OmegaIOMR} For every $n$, there exists an input matrix $D$ such that {\sc IOMR-Fixed} repairs $(n-1)(n-2)$ entries while an optimal algorithm repairs only $(n-1)/2$. 
\end{lem}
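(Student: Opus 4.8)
The plan is to exhibit an explicit family of input matrices $D$ (one for each odd $n$) on which both quantities can be read off directly: the optimum is a set of $(n-1)/2$ edge-disjoint repairs, while {\sc IOMR-Fixed} is driven by its fixed index order into repairing every entry that Lemma \ref{lem:IOMRMax} permits it to touch, namely $(n-1)(n-2)$ of them. Since the ratio is then $(n-1)(n-2)\big/\big((n-1)/2\big)=2(n-2)$, this yields the $\Omega(n)$ lower bound we want.

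First I would build the matrix out of $(n-1)/2$ independent \emph{gadgets} around a common center vertex $1$. Pairing the remaining vertices as $(2,3),(4,5),\dots,(n-1,n)$, each gadget is a single broken triangle $1,2t,2t+1$ whose unique violation has the center edge $D_{1,2t}$ as its top edge and $D_{1,2t+1},D_{2t,2t+1}$ as its (small) bottom edges. I would then choose all remaining cross-gadget distances so that (i) the gadget triangles are the \emph{only} broken triangles in $D$, and (ii) the two ways of repairing a gadget behave very differently: increasing the center edge $D_{1,2t+1}$ up to the top value fixes the triangle and leaves the matrix a metric, whereas increasing the inter-pair edge $D_{2t,2t+1}$ to the same value creates fresh violations against the cross-gadget distances.

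With the matrix in hand the optimum is immediate in both directions. Setting $D_{1,2t+1}$ to the top value for every $t$ is an increase-only solution of size $(n-1)/2$, and by property (i) together with Theorem \ref{thm:structure} it is valid; conversely the $(n-1)/2$ gadget triangles are pairwise edge-disjoint, so any repair must change at least one edge in each, giving a lower bound of $(n-1)/2$ and hence $\mathrm{OPT}=(n-1)/2$. The analysis of {\sc IOMR-Fixed} is where the work lies: because the update $\hat D_{ik}\leftarrow\max(\hat D_{ik},\hat D_{ij}-\hat D_{jk})$ with $j<i$ only ever raises the edge joining the larger-indexed endpoint of the top edge to the third vertex, on the gadget triangle (top edge $(1,2t)$) it is forced to raise $D_{2t,2t+1}$ rather than a center edge. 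I would then trace the single pass of the algorithm and show, by induction on the processing order $(k,i)$, that each raised entry participates in turn (as the large term $\hat D_{ij}$) in a later update that raises a further off-center entry, so the increases propagate until every one of the $(n-1)(n-2)$ coordinates outside the diagonal and outside row and column $1$ has been changed. Combined with the upper bound of Lemma \ref{lem:IOMRMax}, {\sc IOMR-Fixed} then repairs \emph{exactly} $(n-1)(n-2)$ entries.

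The main obstacle, and the step I expect to spend the most care on, is the simultaneous calibration of the cross-gadget distances. They must be large enough that no spurious triangle is broken in $D$ and that the optimal center-edge repairs stay metric, so that $\mathrm{OPT}$ is not accidentally inflated, yet arranged so that the forced inter-pair increases of {\sc IOMR-Fixed} chain through the fixed $(k,i)$ order and reach every admissible coordinate. Verifying that the cascade misses none of the $(n-1)(n-2)$ entries, rather than stalling after touching only a linear number as happens for the naive all-equal construction, is the delicate part, and I expect to establish it by exhibiting, for each target entry, the explicit earlier entry and pivot $j$ that forces it during the pass.
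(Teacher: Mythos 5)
Your construction is genuinely different from the paper's, and it has a gap that I believe is fatal to the plan as stated. The paper does not use a cascade at all: it takes $D_{i1}=D_{1i}=2^i$ and $D_{ij}=0$ otherwise, so that \emph{every} triangle $\{1,r,s\}$ with $2\le r<s$ is already broken in the input with top edge $(s,1)$, and every admissible entry $(s,r)$ is raised on its first visit via the single pivot $j=1$; no propagation of new violations is needed. (Note also that the paper never actually pins the optimum to $(n-1)/2$: for its matrix the optimal increase-only repair has size $n-2$, e.g.\ raise $D_{t1}$ to $2^n$ for $2\le t\le n-1$, which still yields the $\Omega(n)$ ratio but not the literal constant in the statement. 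You are chasing the literal $(n-1)/2$, which is more than the paper's own argument delivers.)

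The gap is in the cascade. Consider an entry $D_{2k}$ with $k\ge 4$, which is not a gadget pair. It is the left-hand side of an assignment exactly twice. Once at outer iteration $k$, inner index $i=2$, where the only admissible pivot is $j=1$, so it is raised only if $\hat D_{21}-\hat D_{1k}>\hat D_{2k}$; the center entries $\hat D_{21},\hat D_{1k}$ can only be assigned at outer iteration $1$ and under your stipulation (i) are never changed there, so this condition forces the \emph{original} triangle $\{1,2,k\}$ to be broken with top edge $(1,2)$, which (i) forbids. The other chance is at outer iteration $2$, inner index $i=k$, which executes before any gadget repair has occurred (the first gadget raise, of $D_{23}$, happens at outer iteration $3$), so the matrix is still the input and a raise would again require an original broken triangle $\{2,j,k\}$ with top edge $(k,j)$, forbidden by (i). Hence the $2(n-3)$ entries in row and column $2$ other than $(1,2)$ and $(2,3)$ are never touched, and {\sc IOMR-Fixed} repairs strictly fewer than $(n-1)(n-2)$ entries on any matrix meeting your constraints. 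To force all admissible entries to be raised you must plant many broken triangles in the input itself (so that every low-index row can be raised via pivot $1$ or during its own early outer iteration), and once you do that the edge-disjointness lower bound of $(n-1)/2$ on the optimum is no longer available; the paper sidesteps exactly this tension by settling for a dense violation pattern and ${\rm OPT}\le n-2$.
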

\begin{proof} Consider a matrix $D$ where \[ D_{ij} = \begin{cases} 0 & \text{ if } i \neq 1, j \neq 1 \\ 2^i & \text{ if } j=1, i > 1 \\2^j & \text{ if } i=1, j > 1 \end{cases} \]

First, we claim that all entries of the form $D_{s1}$ will never be updated as entries will only be updated first time they are seen. Thus 
\begin{align*} 
D_{s1} &= \max(D_{s1}, \max_{t < s}(D_{s1} - D_{1t}))  \\
&= \max(2^s, \max_{t < s} (2^s - 2^t)) \\
& = 2^s 
\end{align*}

Now we just have to verify that the rest of the non-diagonal entries are updated. Let us look at the first time an entry $D_{rs}$ is updated. (Here $r < s$). Then we have that \begin{align*} \hat{D}_{rs} &= \max(D_{rs}, \max_{t < s}(D_{st} - D_{tr})) \\
&=\max_{t < s}(D_{st} - D_{tr}) &[D_{rs}=0] \\
&\ge D_{s1}-D_{1r} \\
&= 2^s - s^r \\
&> D_{rs} \end{align*} 

Thus all other non-diagonal entries will be updated the first time they are seen and we have the desired bound. \end{proof}

For dense optimal solutions let us consider the following gadget. For any given $k < n$, we can construct the following distance matrix $D$ as an input for the increase only sparse metric repair problem.  
\[ \left[ \begin{array}{c c c | c c c  } 0 & \hdots & 0 & * & \hdots & * \\ \vdots &  & \vdots & \vdots & & \vdots \\
0 & \hdots & 0 & * & \hdots & * \\ \hline
* & \hdots & * & & & \\
\vdots & & \vdots & & & \\
* & \hdots & * & &  \end{array} \right] \]

Were we set $D_{ij} = k+1-i$ for $i = 1, \hdots, k$ and $j > k$ (and make all other non diagonal entries equal to 1). This gadget has the desired properties we want as long as $k$ is smaller than $n/2$. It turns out that {\sc IOMR-Fixed} algorithm will always fix the entries in the two $*$ blocks and will never touch the entries in 0 block. 

\begin{thm} \label{thm:dense} For the above construction, if we let $\gamma = k/n$ and if $\gamma < 0.5$, then the optimal solution fixes $\gamma n  (\gamma n - 1)$ while {\sc IOMR-Fixed} repairs $2(n - \gamma n)(\gamma n - 1)$ entries. We can see that IOMR will correct all entries in the two $*$ blocks expect of entries of the form $1j$ or $j1$ and we obtain an approximation ratio of \[ \frac{2(n - \gamma n)(\gamma n - 1)}{\gamma n (\gamma n -1)} = 2\left(\frac{1}{\gamma} - 1 \right) \] 
\end{thm}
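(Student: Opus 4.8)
The plan is to establish three things in turn: the exact size of an optimal increase-only solution on the gadget, the exact set of entries that {\sc IOMR-Fixed} modifies, and then divide. Throughout I write $k=\gamma n$ for the block size, and I reserve the symbol $p$ for the outer ``pivot'' loop of {\sc IOMR-Fixed} to avoid clashing with $k$. The observation that organizes everything is a description of the \emph{only} broken triangles of the gadget: a triple $(i,j,l)$ is broken exactly when $i<j\le k<l$, in which case $il$ is the top edge and its bottom edges are the zero-block edge $ij$ and the $*$-block edge $jl$; every other triple satisfies the triangle inequality. (All three remaining triangle types---both vertices in the block, one vertex in the block and two outside, all three outside---are immediate to check.) Consequently, by Theorem~\ref{thm:structure}(2) any valid increase-only support must contain a bottom edge of each such triangle, and the completion obtained by raising every zero-block distance $D_{ij}$ to $|i-j|$ is a valid metric (a short verification that each of the four triangle types then holds with equality or slack).

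For the optimum I would first note that the line-metric completion above changes exactly the $\binom{k}{2}$ zero-block edges, i.e.\ $k(k-1)$ symmetric entries, so ${\rm OPT}\le k(k-1)$. For the matching lower bound I would argue combinatorially on an arbitrary valid support $S$: for each $j\in\{2,\dots,k\}$ let $U_j=\{\,i<j: ij\notin S\,\}$; since triangle $(i,j,l)$ forces $ij\in S$ or $jl\in S$, whenever $U_j\ne\emptyset$ every edge $jl$ with $l>k$ must lie in $S$, contributing $n-k$ (distinct) $*$-edges. Counting zero-block and $*$-block edges separately gives
\[
|S|\ \ge\ \binom{k}{2}-\sum_j|U_j|+(n-k)\,\big|\{j:U_j\ne\emptyset\}\big|\ =\ \binom{k}{2}+\!\!\sum_{j:\,U_j\ne\emptyset}\!\!\big[(n-k)-|U_j|\big].
\]
Here the hypothesis $\gamma<\tfrac12$ enters: it yields $|U_j|\le j-1\le k-1<n-k$, so every bracket is positive and $|S|\ge\binom{k}{2}$, with equality forcing $S$ to be exactly the zero-block edges. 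Hence ${\rm OPT}=k(k-1)$ entries.

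The main obstacle is the exact trace of {\sc IOMR-Fixed}, which I would carry out pivot by pivot using monotonicity (entries only increase) and the key fact, already isolated in Lemma~\ref{lem:IOMRMax}, that no entry in row/column $1$ is ever updated, so $D_{l1}=k$ persists for every $l>k$. The plan is to prove by induction on the pivot $p$ that the zero block stays $0$ and the all-ones block stays $1$, while at pivot $p=i$ (for $2\le i\le k$) the update $D_{li}\leftarrow\max\big(D_{li},\max_{j<l}(D_{lj}-D_{ji})\big)$ is driven by $j=1$, where $D_{l1}-D_{1i}=k-0=k$, raising every $D_{li}$ with $l>k$ from $k+1-i$ to exactly $k$; one then checks that the later pivots ($p>k$, and $p=l$) leave these entries, the zero block, and the all-ones block unchanged because the relevant differences $D_{ij}-D_{jl}=0-k$ and $D_{mj}-D_{jl}=k-k$ are nonpositive. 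This shows {\sc IOMR-Fixed} modifies precisely the $*$-block entries in rows/columns $2,\dots,k$, namely $2(k-1)(n-k)$ symmetric entries, and that the result $D_{il}\equiv k$ (with the zero and all-ones blocks untouched) is indeed a metric.

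Finally, dividing the two counts gives the stated ratio,
\[
\frac{2(n-k)(k-1)}{k(k-1)}=\frac{2(n-k)}{k}=2\!\left(\frac1\gamma-1\right),
\]
which completes the proof. I expect the only delicate part to be bookkeeping the order of the {\sc IOMR-Fixed} updates so as to guarantee that each $*$-block entry is pushed up to $k$ (and no further) and that nothing outside the $*$-block ever moves; the optimum and the arithmetic are routine once the structure of the broken triangles is fixed.
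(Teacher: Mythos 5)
Your proof is correct and follows the paper's route in outline: classify the broken triangles of the gadget (you recover exactly the paper's characterization, $i<j\le k<l$ with top edge $il$ and bottom edges $ij$, $jl$), exhibit the line-metric completion $D_{ij}\mapsto|i-j|$ on the zero block as a valid increase-only solution touching $k(k-1)$ entries, and trace \textsc{IOMR-Fixed} pivot by pivot to see that it modifies precisely the $2(k-1)(n-k)$ $*$-entries outside row/column $1$. Where you genuinely diverge is the optimality lower bound: the paper argues by an informal exchange heuristic (if $m$ rows of the zero block keep a zero below the diagonal, those columns must coincide, forcing roughly $2(m-1)(n-k)$ repairs outside the block, which is worse when $k<n/2$), whereas you invoke Theorem~\ref{thm:structure}(2) and count directly --- omitting a zero-block edge $ij$ from the support forces all $n-k$ edges $jl$, $l>k$, into it, and $k-1<n-k$ makes every such trade losing. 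Your version is the more rigorous of the two and additionally pins down that the zero block is the \emph{unique} optimal support, which the paper's sketch does not quite establish. Two small notes: the value \textsc{IOMR-Fixed} assigns to the updated $*$-entries is indeed $k$ (driven by $\hat{D}_{l1}-\hat{D}_{1i}=k-0$), as you compute, not $k+1$ as the paper's prose asserts --- this discrepancy does not affect the count of modified entries; and your appeal to Lemma~\ref{lem:IOMRMax} for ``row/column $1$ is never updated'' only yields that such entries can change solely at pivot $1$, so you still owe the (easy) check that pivot $1$ leaves them fixed on this particular input, which the pivot-by-pivot induction you outline would supply in any case.
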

\begin{proof} Consider a broken triangle $D_{rs} > D_{rt}+D_{ts}$ (Note $r,s,t$ have to be distinct). Now if both $r,s \le k$ then we have that $D_{rs} = 0$ and the triangle could not be broken. Suppose then $r,s > k$. In this case $D_{rs} = 1$ and $D_{rt},d_{ts} \ge 1$. Thus, the triangle is not broken. 

Hence, it must be the case that exactly one of $r,s$ is bigger than $k$ and one is smaller. WLOG assume that $s > k$ and $r \le k$. Now if $t > k$. We have that $D_{rs} = D_{rt}$. Hence the triangle cannot be broken. Thus $t \le k$. Now if $t < r$. We have that \[ D_{rs} = k+1-r < k+1-t = D_{rt}.
\] 
Thus again the triangle could not have been broken.

Thus, looking through all the cases it is clear that the only broken triangles are of the form \[ D_{rs} > D_{rt} + D_{ts} \] Where $r < t \le k$ and $s > k$. In this case we have that $D_{rt} = 0$ and $D_{rs} = k+1-r$ and $D_{ts} = k+1-t$. Then since $0 < r < t$ we have that $k+1-r > k+1-t$. 

First let us see that changing all entries $D_{ij}$ where $i \le k$ and $j \le k$ to $\max(i,j)-\min(i,j)$ is a valid solution. This fixes all broken triangles of the above form as we now have \[ k+1-r = t-r +k +1 -t \Rightarrow D_{rs} = D_{rt}+D_{ts} \] Thus we fix all broken triangles. The only thing to make sure this is a valid solution is to make sure we don't break any new triangles.

Any new triangles would have to be of the form \[ D_{rs} > D_{rt}+D_{ts} \] where $r,s \le k$. Suppose if $t \le k$. We have that $D_{rs} = \max(r,s) - \min(r,s)$ and the right hand side is $\max(s,t)+\max(r,t) - \min(s,t) - \min(r,t)$. We can see for all six orderings that these triangles are not broken. Thus, let us now look at $t > k$. In this case we have that the right hand side is $2k+2-r-s$, while the left hand side is $|r-s|$. Then since $r,s \le k$ we see that these triangles cannot be broken either. Thus this is a valid solution. 

Finally we want to see that this solution is an optimal solution. We can see this using the previous Lemma. Suppose we have an optimal solution $P$ where we have zeros on $m$ different rows below the diagonal. In this case we see that at least all $m$ of these columns have to be equal. But this requires $2(m-1)(n-k)$ updates outside of the initial block of 0s. But since $k < n/2$, we see that it would be better to update the 0s instead. Thus, we see that the above solution is optimal. 

Finally to see the theorem, we just need to see what {\sc IOMR-fixed} does on the above input. We claim that {\sc IOMR-fixed} changes all entries in two $*$ blocks to $k+1$ and doesn't touch any other entry.  This can be seen via a simple induction argument on the outer loop of the {\sc IOMR-fixed} algorithm. \end{proof}

\end{document}